\documentclass{article}

\usepackage{geometry}
\geometry{letterpaper}

\usepackage{amsthm}
\usepackage{amssymb}
\usepackage{amsfonts}
\usepackage{tikz}
\usepackage{epstopdf}
\usepackage{authblk}
\usepackage{amsmath,amscd}
\usepackage{float}
\usepackage[caption = false]{subfig}
\usepackage{pgfplots}
\usepackage{tikz}
\usepackage{amscd}
\usepackage{graphicx}
\usepackage{color}
\usepackage{hyperref}
\usepackage{bold-extra}
\usepackage{algorithm2e}
\usepackage{dcolumn}
\usepackage[shortlabels]{enumitem}

\newtheorem{lemma}{Lemma}[section]

\newtheorem{prop}{Proposition}[section]



\newcolumntype{d}[1]{D{.}{.}{#1}}
\newcolumntype{m}[1]{D{\pm}{\pm}{#1}}

\newcommand{\R}{\mathbb{R}}
\newcommand{\G}{\mathcal{G}}
\newcommand{\T}{\mathcal{T}}
\newcommand{\C}{\mathcal{C}}
\newcommand{\Po}{\mathcal{P}}
\newcommand{\bfx}{\boldsymbol{x}}
\newcommand{\bfy}{\boldsymbol{y}}
\newcommand{\bfz}{\boldsymbol{z}}
\newcommand{\bfu}{\boldsymbol{u}}
\newcommand{\bfw}{\boldsymbol{w}}
\newcommand{\bfv}{\boldsymbol{v}}
\newcommand{\bfr}{\boldsymbol{r}}
\newcommand{\e}[1]{^{(#1)}} 


\definecolor{todoGreen}{rgb}{0.0, 0.5, 0.0}
 %

\begin{document}

\title{Calculating the symmetry number of flexible sphere clusters}

\author[1]{Emilio Zappa}
\author[2]{Miranda Holmes-Cerfon}

\affil[1]{Mathematics Department, Fordham University, NY}
\affil[2]{Courant Institute of Mathematical Sciences, New York University, NY.}

\date{}
\maketitle

\begin{abstract}
We present a theoretical and computational framework to compute the symmetry number of a flexible sphere cluster in $\mathbb{R}^3$, using a definition of symmetry that arises naturally when calculating the equilibrium probability of a cluster of spheres in the sticky-sphere limit. 
We define the sticky symmetry group of the cluster as the set of permutations and inversions of the spheres which preserve adjacency and can be realized by continuous deformations of the cluster that do not change the set of contacts or cause particles to overlap. The symmetry number is the size of the sticky symmetry group. We introduce a numerical algorithm to compute the sticky symmetry group and symmetry number, and show it works well on several test cases. 
Furthermore we show that once the sticky symmetry group has been calculated for indistinguishable spheres, the symmetry group for partially distinguishable spheres (those with non-identical interactions) can be efficiently obtained without repeating the laborious parts of the computations. 
We use our algorithm to calculate the partition functions of every possible connected cluster of 6 identical sticky spheres, generating data that may be used to design interactions between spheres so they self-assemble into a desired structure. 
\end{abstract}

\section{Introduction}

Symmetry plays an important role in the study of molecules and clusters of more general particles.  A group of particles, like atoms or colloids, can assemble into a variety of different clusters, and to determine which of these forms requires evaluating the number of symmetries of each cluster  \cite{sethna}. 
The definition of a symmetry varies depending on the  
particular system, quantities of interest, and method of calculating the partition function from statistical mechanics, but for a great many clusters -- those that are close to rigid, i.e. they don't move far from a reference configuration -- the definition coincides with the geometrical symmetries of the configuration, such as a symmetry upon reflection or upon rotation by certain angles around certain axes. Much effort has gone into developing methods to evaluate the geometrical symmetry group of molecules, an effort which is ongoing since this computation is a challenge for large molecules \cite{longuet, symmetry2, wales,symmetry5,sitharam}.

When a cluster is flexible, i.e. it can deform by some non-negligible amount along its internal degrees of freedom without a significant change in energy, the geometrical symmetry of the cluster is less meaningful, because the cluster almost always adopts a configuration with no or few true geometrical symmetries. However, the concept of symmetry is still meaningful, as long as one extends symmetry elements to include operations that continuously transform the cluster to one with a similar energy, called  ``feasible'' transformations in the seminal paper by Longuet-Higgins \cite{longuet}. 
This is a useful definition of symmetry because one can show that the number of such symmetries equals the number of times one has overcounted the cluster for the most common methods of evaluating its partition function. 
Calculating this kind of symmetry group requires more sophisticated methods since it requires understanding the kinetic pathways the cluster can follow, and not just its static geometrical symmetries. 
Some methods to compute such symmetry elements find pathways by searching a database of local minima and transition states on a cluster's energy landscapes, a process which can work well for atomic clusters, whose energy changes smoothly with the configuration \cite{wales,Wales:2014eh}.
Other methods are based on identifying a set of generators for the symmetry group, and then building the full group using group-theory software \cite{gap}. 
These methods often start with a particularly symmetric reference configuration with many geometical symmetries, or else identify symmetries by eye \cite{symmetry1}. 
Sometimes it is even possible to prove more general symmetry results about a cluster, and to extend these proofs to the more restrictive symmetry groups obtained when an energy function is present, for example as in \cite{flapan}.
Because identifying generators by eye or by first finding a particularly symmetric configuration, or proving statements about individual clusters, requires an external observer to provide input, 
these latter methods cannot be used to automatically evaluate the symmetries of a large collection of clusters. 

We are interested in particles with diameters of nano- to micrometres (\emph{colloids}), which are much larger than atoms, and form the building blocks for a wide range of materials \cite{Lu:2013dn,colloids}. 
Such particles interact attractively over scales typically much smaller than their diameters, so it is effective to model them in the \emph{sticky limit}, where the particles are treated as hard, classical rigid bodies that can't overlap, such as spheres, and the interaction potential is a delta function at the point of contact between a pair of bodies \cite{colloids,clusters}. Therefore, the energy of a cluster of particles is proportional to the number of pairs of particles that are exactly in contact. In contrast to atomic systems, this energy function changes abruptly at discrete locations in configuration space, so concepts developed for smoother energy landscapes, such as local minima and transition states, are no longer as meaningful. Therefore, algorithms for calculating the symmetry number of molecules with smooth interaction potentials will not work directly for these clusters. 

Our goal is to clarify the concept of symmetry for sticky-sphere clusters, possibly flexible, and to provide an algorithm that can evaluate the symmetry number of a sticky-sphere cluster automatically. 
We are interested both in indistinguishable particles and particles which can be partially distinguished. The theory developed here is a natural application of the theory of molecular symmetries from chemical physics, which is concerned with the types of symmetries present in a system's Hamiltonian. However,  this theory is usually presented in the context of quantum mechanics, and is not always straightforward to adapt to a purely classical setup \cite{cates}.  
 Our goal therefore is to present the theory of symmetry for sticky-sphere clusters in a mathematical framework, highlighting those aspects of the theory that depend only on geometry or topology and the overall connection to group theory. We hope this presentation will be accessible to those without a background in physics or chemistry, and will make it easy to adapt the theory to other situations where a modified definition of symmetry is needed. That the concept of symmetry for classical colloidal clusters needs clarification is evident in light of the numerous recent papers that attempt to explain it and the link to entropy in the statistical mechanics of such systems (e.g. 
\cite{Swendsen:2006gm,symmetry1,Frenkel:2014cn,cates}.)

Our theory starts with an equivalence relation between clusters, which says that two clusters are the same if one cluster can be continuously deformed into another without breaking any contacts or causing spheres to overlap. 
We define the sticky symmetry group of a sphere cluster to be the set of permutations or reflections of a cluster that can be achieved by such a continuous deformation, and the symmetry number to be the size of this group.  
This is a natural extension of the definition for rigid clusters, and we show how the sticky symmetry group is related to other symmetry groups commonly studied, the point group of the cluster and the automorphism group of its corresponding adjacency graph. 
Among the symmetry groups for flexible molecules studied in the literature, our approach is most closely related to the topological group studied by Flapan \cite{flapan}. The difference is that the symmetry group she considers arises from purely topological properties of the cluster's contact graph, while our symmetry group contains some geometry, since we require spheres to not overlap. 

We introduce a numerical algorithm to automatically compute the sticky symmetry group of a sphere cluster. 
The main component of the algorithm is a numerical method to find a continuous deformation linking one cluster to another, based on the steepest descent method in optimization \cite{kress} and simulated annealing \cite{liu}. 
 We apply this algorithm to several test cases and show it works well for small clusters. In addition we show that once the sticky symmetry group has been calculated for indistinguishable spheres, it can be efficiently computed when some groups of spheres are distinguishable from others, for any possible partitioning of the spheres into distinguishable groups. 
 
Ultimately, we wish to use the algorithm in an exhaustive computation of the partition functions of small clusters of sticky spheres (which depend on the symmetry numbers), and then ask how to design interactions between spheres so they self-assemble into a desired structure. Toward this aim, we compute the symmetry numbers and partition functions of every connected cluster of $N=6$ spheres, and comment on some of the physical insight this data gives us. 

The outline of the paper is as follows. In section \ref{sec:description} we give an overview of how the symmetry number arises when studying the statistical mechanics of sticky-sphere clusters. In Section \ref{sec:count} we develop the mathematical theory required to define a symmetry number for a flexible sticky-sphere cluster. In Section \ref{sec:num} we introduce a numerical algorithm to compute the symmetry number for a flexible cluster, and in Section \ref{sec:examples} we apply our theory and algorithm to several examples, including the exhaustive calculation for $N=6$ spheres.  Section \ref{sec:conclusion} concludes and discusses further applications in which this algorithm may be used.

\paragraph{Mathematical setup.} A cluster of $N$ spheres is a pair $(\bfx,\bfr)$, where $\bfx = (\bfx_1, \ldots, \bfx_N) \in \R^{3N}$ is the vector of sphere centers, with $\bfx_i \in \R^3$  the center of $i$-th sphere, and $\bfr = (r_1,\ldots,r_N)$ is the vector of sphere radii, with $r_i>0$ the radius of the $i$th sphere. 
We suppose that $m$ spheres are in contact, given by the set $E = \{ (i_1, j_1), \ldots, (i_m,j_m)\}$. 
When two spheres are in contact, their centers are related by the equation
\begin{equation}\label{eq:q}
 |\bfx_i-\bfx_j|^2 = (r_i+r_j)^2 , \quad (i,j) \in E\,.
\end{equation}
We assume that all non-contacting pairs of spheres do not overlap, so their centers must satisfy the inequality 
\begin{equation}\label{eq:bound}
|\bfx_i -\bfx_j|> r_i+r_j, \quad (i,j) \notin E\,.
\end{equation}
It is sometimes convenient to represent the set of contacts by the adjacency matrix $A$,  an $N \times N$ matrix whose entries are given by
\begin{equation}\label{eq:adj}
A_{ij} = \left\{ \begin{array}{cl}
1 & \text{if $(i,j) \in E$} \\
0 & \text{if $(i,j) \notin E$}
\end{array} \right.
\,.
\end{equation}
Notice that the pair $(\bfx,\bfr)$ is sufficient to characterize the cluster, since from it we can determine the set of contacts $E$ and hence the adjacency matrix $A$. 

We define $M_{A}^{(\bfr)}$ to be the set of all configurations $\bfx$ with adjacency matrix $A$ and radii $\bfr$, i.e. the set of points in $\R^{3N}$ which satisfy \eqref{eq:q}, \eqref{eq:bound}:
\begin{equation}\label{eq:m}
M_{A}^{(\bfr)} = \left\{ \bfy \in \R^{3N} : q_{ij}(\bfy) = (r_i+r_j)^2, \; \text{if} \; A_{ij} = 1, \; q_{ij}(\bfy) > (r_i+r_j)^2, \; \text{if} \; A_{ij} = 0 \right\}\,,
\end{equation}
where $q_{ij}(\bfx) =  |\bfx_i-\bfx_j|^2$. 
In the following, we will usually drop the dependence on the radii $\bfr$, and write $M_A$ for $M_{A}^{(\bfr)}$ and call a cluster $\bfx$, since $\bfr$ doesn't change for a given problem. 
 If the gradients $\{ \nabla q_{ij}(\bfy) \}_{(i,j)\in E}$ are linearly independent for every $\bfy \in M_A$, then $M_{A}$ is a manifold of dimension $p = 3n-m$.  
In general, $M_A$ is not connected, as we illustrate with examples in Section \ref{sec:examples}. We denote by $M_{A, \bfx}$ the connected component of $M_A$ to which cluster $\bfx$ belongs.  

The cluster $\bfx$ is \emph{rigid} if every point $\bfy$ in the connected component $M_{A, \bfx}$ can be obtained as a rotation or translation of $\bfx$, otherwise it is \emph{non-rigid} or \emph{flexible} \cite{rigid}. In the latter case, the cluster has internal degrees of freedom, other than translations and rotations: it can be continuously deformed without breaking contacts.

\section{Overview of the symmetry number and partition function for a sticky-sphere cluster}\label{sec:description}

\begin{figure}
\centering
\includegraphics[scale = 0.2]{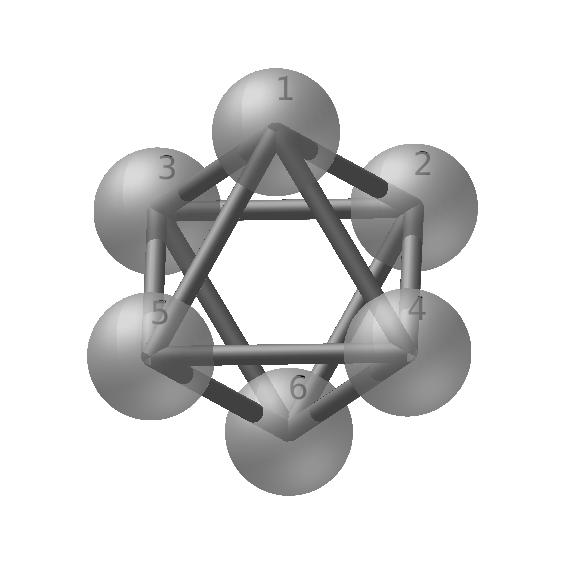}
\includegraphics[scale = 0.23]{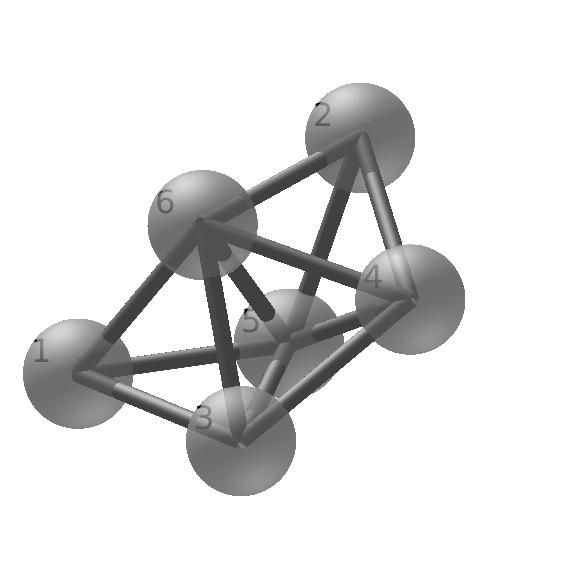}
\includegraphics[scale = 0.25]{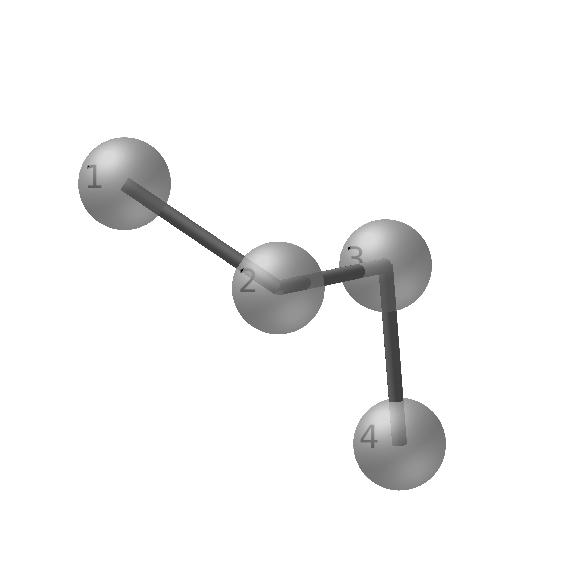}
\caption{Left and middle:  the two rigid clusters with $N = 6$ spheres, an octahedron (left) and a polytetrahedron (middle).  Each of these has several geometrical symmetriess. 
Right: a flexible chain of $N = 4$ spheres. In this configuration, the chain has no geometrical symmetries. 
All spheres are visualized with the radius half their actual size for clarity.}
\label{fig:loop4}
\end{figure}

We now give a brief overview of the partition function for a sphere cluster in the sticky-sphere limit, and  explain why the symmetry number enters. Readers interested in only in the mathematical definitions may skip to section \ref{sec:count}. 

We start with a cluster $\bfx$ of identical spheres with adjacency matrix $A$ with sticky interactions, and we wish to know the probability it will occur in equilibrium. We consider a system where we cannot tell $\bfx$ apart from a  cluster obtained by performing any of the following transformations: (i) rotating it, (ii) translating it,  (iii) deforming it along internal degrees of freedom without changing its adjacency matrix, (iv) reflecting it about an axis, or (v) permuting the particles. 
 In a classical system, such indistinguishability could arise because we cannot experimentally distinguish such clusters, or it could be imposed artificially even if we can distinguish them because we are only interested in properties of a group of clusters, and not individual clusters within the group. (In a quantum system, particles may be truly indistinguishable in a sense not possible in classical mechanics, but we do not consider quantum effects here.)
The equilibrium probability of finding $\bfx$ or any of its transformed versions is proportional to the partition function $Z_{\bfx}$, namely the integral of the Boltzmann distribution over the region in configuration space corresponding to all the possible transformations above of $\bfx$.\footnote{
Specifically, given the partition function for a collection of distinct (not related by any of the 5 transformations) clusters $\bfx\e{1},\bfx\e{2},\ldots,\bfx\e{k}$, the equilibrium probability to observe a cluster $\bfx\e{i}$ (or a transformation of it), given that at least one of the set above was observed, is $ Z_{\bfx\e{i}} / \sum_{j=1}^kZ_{\bfx\e{j}}$. 
} 
For sticky-sphere interactions where all pairwise interactions are identical, and when the constraints $\{ \nabla q_{ij}(\bfy) \}_{(i,j)\in E}$ are linearly independent\footnote{
When these constraints are not linearly independent, a similar expression may still be available but requires an additional parameter beyond $\kappa$, see \cite{Kallus:2017hi}.
} for every $\bfy \in M_{A, \bfx}$, the partition function can be written as \cite{miranda,Kallus:2017hi,clusters}
\begin{equation}\label{Zx}
Z_{\bfx} = C\: n_{\bfx}\: \kappa^m \int_{M_{A, \bfx}} f(\bfx') d\sigma(\bfx')\,.
\end{equation}
Here $\kappa>0$ is a system-dependent constant depending on the properties of the interaction potential and temperature, 
$f(\bfx)$ is a function depending only on the relative locations of the spheres (physically it is related to the vibrational entropy), 
 $d\sigma(\bfx)$ is the natural surface measure on $M_{A, \bfx}$ induced by restriction from the ambient Euclidian space, which also only depends on the relative locations of the spheres, and $C$ is a constant that is the same for all clusters with the same number of spheres $N$ (provided the space they live in is large enough that excluded volume effects do not matter.) 
We usually put additional restrictions on the cluster to make the integral \eqref{Zx} finite, for example by assuming the cluster is contained in a large box, or by fixing its center of mass. 

The remaining quantity in \eqref{Zx} is $n_{\bfx}$, which is the number of disconnected copies of $M_{A, \bfx}$ that we obtain by permuting or reflecting particles. 
Usually this factor is written as $2N!/\sigma$, where $\sigma$ is the so-called symmetry number of the cluster. Because the definition of symmetry is not fundamental, but depends on the method of calculating the partition function, we prefer to call $n_{\bfx}$ the \emph{counting number}, to highlight that it counts the number of geometrically isomorphic copies of $M_{A, \bfx}$ obtained by such transformations.

To see how the expression in \eqref{Zx} accounts for the five transformations listed above, notice that 
 $M_{A,\bfx}$ contains all possible transformations of $\bfx$ available through rotations, translations, and deformations along the internal degrees of freedom of a cluster. Indeed, these transformations don't alter the constraints in \eqref{eq:m} defining the manifold, and they are continuous so they take $\bfx$ to another point in $M_{A,\bfx}$. We only need to explicitly account for reflections and permutations, which can sometimes put $\bfx$ on a different connected component of $M_{A}$, or a different manifold $M_{A'}$ for $A\neq A'$. This accounting is done through the counting number. 

To see how, it is easiest to first consider a rigid cluster. 
An upper bound on the counting number is obtained by counting the total number of permutations and reflections: there are $N!$ ways of permuting the particles, and for each permutation we can reflect or not the cluster through some plane, so there are up to $2N!$ copies of the original cluster. 
This set of permutations and reflections are called \emph{permutation-inversion} (PI) operations \cite{longuet}. 
However, not all of these $2N!$ copies are distinct: some of these copies might correspond to a rotation or a reflection of the cluster, which we have already accounted for via the integral over $M_{A,\bfx}$.  
The number of such repeated copies is defined to be the \emph{symmetry number} $\sigma$ of the cluster, because each repeated copy corresponds to a geometrical symmetry element in the point group of the cluster \cite{wales}. 
The number of distinct copies of $M_{A,\bfx}$ available by permutations and reflections is then $n_{\bfx} = 2N!/\sigma$.

For clusters with the same number of contacts, the symmetry number can play an important role in determining the relative probabilities, as \eqref{Zx} shows that any ratio of symmetry numbers between clusters affects their relative probabilities by an equal ratio. 
Indeed, experiments have shown that in an ensemble of $6$ particles interacting with a short-range depletion interaction, the least symmetric cluster (the polytetrahedron, with symmetry $\sigma=4$, see Figure \ref{fig:loop4}) occurs 24 times more frequently than the most symmetric cluster (the octahedron, with $\sigma=48$, see Figure \ref{fig:loop4}), and most of this discrepancy is due to the ratio of symmetry numbers \cite{Meng:2010gsa}. 

For a flexible cluster, the counting number must account for the fact that the cluster has internal degrees of freedom. Therefore, even if none of the $2N!$ permutation-inversion operations correspond to a perfect geometrical symmetry of $\bfx$, they may still transform the cluster so it lies on the same connected component $M_{A,\bfx}$, and therefore has already been accounted for in the integral over $M_{A,\bfx}$. 
For example, consider a chain of four identical spherical particles, labelled 1--4, as in Figure \ref{fig:loop4}. This particular embedding $\bfx$ of the chain is purposely chosen so the chain has no geometrical symmetries; its point group is trivial. However, the permutation $(14)(23)$, which reverses the ordering of the spheres, results in a cluster that can be obtained from the original one by a continuous deformation, for example by straightening the chain, rotating it by 180$^\circ$, then crinkling it up again. The permuted cluster therefore lies on the same connected component $M_{A,\bfx}$.
In this case, to integrate over the correct space of transformations if we start with the integral expression in \eqref{Zx},  we need to define the symmetry number $\sigma$ to be the number of permutation-inversion operations that can also be realized by a combination of transformations (i--iii), namely translating, rotating, and internal deformations; none of these need correspond to an actual geometrical symmetry. 
The counting number is still the same, $n_{\bfx}=2N!/\sigma$. 

These ideas may be easily extended to the case of partially distinguishable spheres, i.e. where certain groups of spheres are distinguishable. 
Physically, this situation may arise when the particles have different kinds of interactions with each other, or are physically different in some way such as in the materials they are made with. Such situations are of interest in self-assembly problems where one wants to design interactions between particles to so they assemble into a desired structure \cite{hormoz,Zeravcic:2014it}. 
For example, we may consider spheres in some set $C_1 = \{1,2\}$ to be indistinguishable from each other, and those in set $C_2=\{3,4,\ldots,N\}$ to be indistinguishable from each other, but we can determine whether a sphere belongs to set $C_1$ or $C_2$. 
It is convenient to think of the spheres as having different colors, so 
 we may call spheres in $C_1$ ``blue'' and spheres in $C_2$ ``red.'' In general there could be anywhere from $1$ to $N$ colors. 

When there are two or more colors, what changes in the partition function \eqref{Zx} are the following: (i) the factor $\kappa^m$ is replaced by $\prod_{(i,j)\in E} \kappa_{ij}$, a product of sticky parameters $\kappa_{ij}$ for each pair $(i,j)$ in contact in the cluster; and (ii) the counting number $n_{\bfx}$ must be recomputed to account for the number of different copies of $M_{A,\bfx}$ one obtains under the more restrictive set of permutations which preserve the colors. Notably, the integral over $M_{A,\bfx}$, which is laborious to compute \cite{mcmc},  does not change. An important contribution of this paper will be to show that the counting number for colored spheres can be efficiently obtained from the counting number for indistinguishable spheres. Therefore, using this framework one can easily compute the partition function after changing the interactions (sticky parameters) and colorings of the spheres, once the integral and counting number have been computed for indistinguishable spheres.

\section{Counting number, symmetry number, and sticky symmetry group}\label{sec:count}

This section more precisely elaborates on the ideas introduced in section \ref{sec:description}. 
Our aim is to define the counting number and symmetry number of a cluster and show how they are related to the number of isometric copies of manifolds $M_{A,\bfx}$. 

\medskip

We start by fixing the radii $\bfr = (r_1, \ldots, r_N)$ and defining the set $Y$  of all the physical realizations of sticky-sphere clusters with that set of radii:
\begin{equation*}
Y= \R^{3N} \setminus \{ \bfx \in \R^{3N} : |\bfx_i-\bfx_j| < r_i+r_j, \; \exists i, j, \; i\neq j \}.
\end{equation*}
Within set $Y$ the adjacency matrix of a cluster  $A(\bfx)=A(\bfx,\bfr)$ is well-defined. 

We introduce in $Y$ an equivalence relation $\sim$ between clusters, that tells us which clusters are assumed to be ``the same,''  written $\bfx\sim \bfy$. We define
\begin{equation}\label{eq:sim}
\bfx \sim \bfy \;\;\Leftrightarrow\;\; A(\bfx) = A(\bfy), \text{ and }
\exists \varphi:[0,1]\to M_A, \varphi \text{ cts}, \text{ s.t. }\varphi(0)=\bfx,\; \varphi(1)=\bfy.
\end{equation} 
It is trivial to verify that $\sim$ is an equivalence relation. 

In words, $\bfx\sim \bfy$ if these clusters have the same adjacency matrix, and there is a continuous deformation, formed from some combination of rotations, translations, and motion along internal degrees of freedom, from $\bfx$ to $\bfy$. 
By construction, if $\bfx\sim\bfy$ then they belong to the same connected component of $M_{A(\bfx)}$, i.e. $M_{A(\bfx),\bfx} = M_{A(\bfy), \bfy}$. 

The set of clusters which can be obtained by a continuous deformation of $\bfx$ is the equivalence class of $\bfx$, written $[\bfx]$. A trivial but important relation is that
\begin{equation}\label{relation}
[\bfx]=M_{A,\bfx}\,.
\end{equation}
Two clusters are distinct if they belong to different equivalence classes: they either have different adjacency matrices, or there is no continuous deformation from one to the other. 

We denote by $X$ the quotient set of $Y$ modulo $\sim$:
\begin{equation}\label{eq:Xr}
X = Y/\sim.
\end{equation} 
$X$ is the set of all distinct equivalence classes of clusters. 

We point out that each of $Y$, $\sim$, $X$ depends on $\bfr$, but we suppress this dependence in the notation for brevity.

We also need to consider the set of permutations and reflections of a cluster $\bfx$.
A reflection of $\bfx$ is simply $-\bfx$; any reflection about any other plane is then obtained by rotating a particular reflection. 
To construct a permutation of $\bfx$ we start with a permutation matrix $P$, an $N \times N$ matrix with entries $P_{ij}=1$ if $i\to j$ after the permutation is applied, and $P_{ij}=0$ otherwise.  The permuted cluster is then $(P \otimes I_3) \bfx$, where $I_3$ is the $3 \times 3$ identity matrix, and $\otimes$ is the Kronecker product of matrices  \cite{jones}.  
For brevity we write $\widetilde{P}=P \otimes I_3$ so that $\widetilde{P}\bfx$ is the permuted version of $\bfx$. 

Here are two useful facts about a permutation matrix $P$: one,  $P^{-1} = P^T$, since $P$ is an orthogonal matrix, and two, if  the adjacency matrix for $\bfx$ is $A$, then the adjacency matrix for $\widetilde{P}\bfx$ is $PAP^T$. 

To make the link to symmetry groups, we need to consider how the group of permutation-inversion operations acts on $[\bfx]$. 
It is simplest to first consider spheres with identical radii that are indistinguishable; we do this in Section \ref{sec:identical}. Then, we consider extensions to partially distinguishable spheres (Section \ref{sec:col}) and spheres with different radii (Section \ref{sec:rad}).

\subsection{Indistinguishable spheres with identical radii}\label{sec:identical}

Throughout this section we assume $\bfx$ is a cluster of $N$  indistinguishable spheres with identical radii. 

\paragraph{Basic results from group theory.}
Consider the product group 
\begin{equation}\label{eq:prod_group}
P(N) \times C_2 = \{ (P,\delta) : P\in P(N), \; \delta \in C_2 \},
\end{equation}
where $P(N)$ is the group of $N \times N$ permutation matrices, and $C_2 = \{ \pm 1\}$. An element in $P(N) \times C_2$ is called a \emph{permutation inversion} (PI) operation \cite{longuet}.  

We define an action of $P(N) \times C_2$ on the quotient set $X$ given in \eqref{eq:Xr} as 
\begin{equation}\label{eq:action}
(P,\delta) \cdot [\bfx] \; = \; 
 [\delta\widetilde{P}\bfx] \,.
\end{equation}
That is, the action permutes the spheres in $\bfx$ and possibly reflects it. 
The proof that $\cdot$ is a well-defined group action is given in the Appendix.

We define the \emph{counting number} $n_{\bfx}$ of a cluster $\bfx \in \R^{3N}$ to be the size of the orbit of $[\bfx]$ with respect to the action $\cdot$, written $\mbox{orb}([\bfx])$:
\begin{equation}\label{eq:sym_num}
n_{\bfx} = |\mbox{orb}([\bfx])| =  \left|\{ [\delta \widetilde{P}\bfx ] : P \in P(N), \delta \in C_2 \}\right|.
\end{equation}
In words, the counting number equals
the number of distinct copies of $[\bfx]$ that we obtain by permuting spheres in $\bfx$ or reflecting it. 

Some permutations or inversions of $\bfx$ leave its equivalence class unchanged. 
The set of such operations forms the stabilizer of $[\bfx]$, $\mbox{stab}([\bfx])$: 
\begin{equation}\label{eq:stab1}
\mbox{stab}([\bfx]) = \{ (P,\delta) \in P(N) \times C_2 : (P,\delta) \cdot [\bfx] = [\bfx]\}.
\end{equation}
It is a fact of group theory that $\mbox{stab}([\bfx])$ is a subgroup $P(N)\times C_2$ \cite{jones}. 

The orbit-stabilizer theorem relates the sizes of the orbit and stabilizer of $[\bfx]$ \cite{jones}: 
\begin{equation}\label{OSthm}
|\mbox{orb}([\bfx])| = \frac{|P(N) \times C_2|}{|\mbox{stab}([\bfx])|}\,.
\end{equation}

\paragraph{Sticky symmetry group and relation to manifolds $M_{A(\bfx),\bfx}$.}

We will now show how this theorem is related to the number of distinct copies of $M_{A(\bfx),\bfx}$, and consider a more explicit way to define the stabilizer. 

Because of \eqref{relation}, 
the group action \eqref{eq:action} can also be thought of as acting on manifolds as
\begin{equation}\label{actionM}
(P,\delta) \cdot M_{A(\bfx),\bfx} = M_{PA(\bfx)P^T, \delta\widetilde{P}\bfx}\,.
\end{equation}
Therefore, an element $(P,\delta)$ belongs to the stabilizer if and only if $PA(\bfx)P^T = A(\bfx)$,  and both $\bfx$ and $\delta \widetilde{P}\bfx$ belong to the same connected component of the manifold $M_{A(\bfx)}$.
This relation gives another way to define the stabilizer that will prove useful in computations. 

Recall that  the \emph{automorphism group}  of an adjacency matrix $A$ is the set of all permutation matrices $P$ that preserve adjacency:
\begin{equation}\label{eq:aut}
\G = \text{Aut}(A) = \{ P \in P(N) : PA = AP \}.
\end{equation}
An element $P\in \G$ corresponds to a permutation that, when applied to spheres in a cluster, doesn't change who each sphere is in contact with. 
Notice that the automorphism group is independent of the embedding of  $\bfx \in \R^{3N}$ and the radii $\bfr$; it is a property only of the graph associated with the adjacency matrix. 

Using this definition we can write the stabilizer more explicitly as:
\begin{equation}\label{eq:top_group}
\T_{\bfx} := \mbox{stab}([\bfx]) = \{ (P, \delta) \in \G \times C_2  : \exists \varphi : [0,1] \rightarrow M_A, \varphi \text{ cts}, : \varphi(0) = \bfx, \; \varphi(1) = \delta \widetilde{P} \bfx \}\,.
\end{equation}
We call $\T_{\bfx}$ the  \emph{sticky symmetry group} of the cluster. 
From this definition it is clear that $\T_{\bfx}$ is a subgroup of $\G \times C_2$, a group we call the \emph{automorphism-inversion group}.
In particular, $\T_{\bfx}$ is the set of the elements of the automorphism-inversion group $\G$ that can be obtained as either a continuous deformation of the cluster, or a deformation combined with a reflection. Notice that $\T_{\bfx}$ depends on the class $[\bfx]$, in contrast to $\G\times C_2$, which doesn't. 

We define the  \emph{symmetry number} $\sigma_{\bfx}$ of the cluster $\bfx$ to be the cardinality of $\T_{\bfx}$:
\begin{equation}\label{eq:symmetrynumber}
\sigma_{\bfx} = |\T_{\bfx}|.
\end{equation}
Combined with the orbit-stabilizer theorem \eqref{OSthm}, we obtain the relationship 
\begin{equation}\label{eq:nx}
n_{\bfx} = \frac{2N!}{\sigma_{\bfx}}.
\end{equation}

Because of \eqref{relation} and \eqref{actionM},  $n_{\bfx}$ equals the number of disconnected manifolds one obtains by applying all the permutation-inversion operations to $M_{A(\bfx),\bfx}$. 
Why is it reasonable for $n_{\bfx}$ to appear in the partition function \eqref{Zx}?
We show in the Appendix that  the mapping $\bfx\to \delta \widetilde{P}\bfx$ is an isometry, and therefore 
$\int_{M_{A, \bfx}} f(\bfy) d\sigma(\bfy) = \int_{M_{PA(\bfx)P^T, \delta\widetilde{P}\bfx}}f(\bfy) d\sigma(\bfy)$. 
Therefore, this factor accounts for the integral over the parts of configuration space that we wish to include in the partition function, but that are identical to the factor already computed in the integral over $M_{A(\bfx),\bfx}$. 


\medskip

We point out that $\T_{\bfx}$ provides information on the connectivity of the manifold $M_A$. In particular, if $\T_{\bfx} \subsetneq \G \times C_2$, then $M_A$ is disconnected, with certain components that are not related by reflections.\footnote{
If a cluster is chiral, i.e. there is no continuous transformation between $\bfx$ and $-\bfx$, then $M_A$ will be disconnected simply because $M_{A(\bfx),\bfx}\neq M_{A(\bfx),-\bfx}$. 
}
 To see why, suppose $(P,\delta) \in \G \times C_2 \setminus \T_{\bfx}$ is an element in the automorphism-inversion group but not in the sticky symmetry group, and let $\bfy = \delta \widetilde{P}\bfx$. 
 Since $P \in \G$, $\bfx$ and $\bfy$ belong to the same manifold $M_A$. However, since $(P,\delta) \notin \T_{\bfx}$, there exists no path in $M_A$ connecting $\bfx$ with $\bfy$, and therefore $\bfx$ and $\bfy$ belong to different connected components of $M_A$. 
We will provide a concrete example where this happens in Section \ref{sec:examples}.

The converse is not true in general: $M_A$ could be disconnected even if $|\T_{\bfx}|=|\G|$ (note we do not say $\T_{\bfx}=\G$ because each element of $\T_{\bfx}$ is associated with an inversion while the elements in $\G$ are not. However, in most cases only one of $+P,-P$ is in $\T_{\bfx}$.) We will provide a counterexample in section \ref{sec:examples}.

\paragraph{Relationship to the point group.}

It is useful to relate the sticky symmetry group to another group widely studied in physics and chemistry, the point group of a cluster. The \emph{point group} $\Po_{\bfx}$ of the cluster $\bfx$ is the set of all the elements of the automorphism-inversion group that can be realized as a rotation of $\bfx$ \cite{wales}:
\begin{equation}\label{eq:PG}
\Po_{\bfx} = \{ (P,\delta) \in \G \times C_2 : \exists R \in SO(3): \delta \widetilde{P}\bfx = (R \otimes I_N) \bfx   \}\,.
\end{equation}

There is a useful relation between the point group, the sticky symmetry group, and the automorphism-inversion group of a cluster:
\begin{equation}\label{eq:incl}
\Po_{\bfx} \subseteq \T_{\bfx} \subseteq \G \times C_2.
\end{equation}
That is, the sticky symmetry group $\T_{\bfx}$ is a subgroup of the automorphism-inversion group, and it contains the point group $\Po_{\bfx}$. 

To see that $\Po_{\bfx} \subseteq \T_{\bfx}$, let $(P,\delta) \in \Po_{\bfx}$. By the definition of point group  \eqref{eq:PG}, there exists an orthogonal matrix $R \in SO(3)$ such that $(R \otimes I_N)\bfx = \delta \widetilde{P}\bfx$. We now show we can achieve this transformation continuously.  Since $SO(3)$ is connected, there exists a continuous path $R(t) : [0,1] \rightarrow SO(3)$ such that $R(0) = I_3$ and $R(1) = R$. We define the path $\sigma(t) = (R(t) \otimes I_N)\bfx$. For every $t \in [0,1]$, $\sigma(t) \in M_A$ since 
\begin{equation*}
q_{ij}(\sigma(t)) = |R(t)\bfx_i-R(t)\bfx_j|^2 = |R(t)(\bfx_i-\bfx_j)|^2 = |\bfx_i-\bfx_j|^2 = q_{ij}(\bfx),
\end{equation*}
and therefore $q_{ij}(\sigma(t)) = (r_i+r_j)^2$ if $A_{ij} = 1$, and $q_{ij}(\sigma(t))> (r_i+r_j)^2$ if $A_{ij} = 0$. Moreover, $\sigma(t)$ is continuous and $\sigma(0) = \bfx$, $\sigma(1) = (R \otimes I_N)\bfx = \delta \widetilde{P}\bfx$. Therefore, $(P,\delta)$ belongs to the sticky symmetry group $\T_{\bfx}$.

Relationship \eqref{eq:incl} is useful for computing the sticky symmetry group of clusters that are not too large, because for such clusters the automorphism and point groups may be calculated on reasonable timescales.  
The automorphism group may be computed using algorithms from graph theory \cite{nauty}, and 
the automorphism-inversion group is obtained as a direct product. 
Note that calculating the automorphism group becomes a challenge
 for clusters that are not small, as its size can grow extremely rapidly with $N$ (e.g. see examples in \cite{symmetry3}.) 

If one can compute the automorphism-inversion group, then one can obtain the 
 point group $\Po_{\bfx}$ by checking which automorphisms also preserve the set of pairwise distances \cite{graph1}. Let $D_{\bfx}$ be the $N \times N$ matrix whose entries measure the squared distance between the centers of the spheres:
\begin{equation}\label{eq:dist_mat}
(D_{\bfx})_{ij} = |\bfx_i - \bfx_j|^2\,.
\end{equation}
Then  $(P,\delta) \in \Po_{\bfx}$ for some $\delta$ if and only if $PD_{\bfx}P^T = D_{\bfx}$; in other words, all the pairwise distances are the same after applying the permutation. (Of course, to compute the point group one also needs to determine $\delta$, which can be done straightforwardly). 
Although this result is widely used in physics and chemistry to compute the point group of molecules, we have not found a rigorous proof of it in the literature (we found the standard reference \cite{YH38} to be incomplete), and therefore include one in the Appendix for completeness.

\medskip

\noindent \emph{Remark.} In physics and chemistry, the counting number $n$ of a rigid molecule with $N$ atoms is often computed as
\begin{equation*}
n = \frac{\xi N!}{\sigma},
\end{equation*}
where $\xi$ is 2 if the molecule is chiral and 1 otherwise, and $\sigma$ is the number of \emph{rotations} which are equivalent to a permutation of the atoms. 
Recall that a cluster $\bfx$ is chiral if there is no rotation-translation operation which maps the cluster to its reflection $-\bfx$. 
This is consistent with our formula \eqref{eq:nx}: if the cluster is chiral, then $\xi = 2$ and our symmetry number is the same as the one above, $\sigma_{\bfx} = \sigma$, since no reflections belong to the point group in this case. If the cluster is achiral, then $\xi = 1$ and our symmetry number is $\sigma_{\bfx}=2\sigma$, since half of the elements of the point group are rotations. 

\medskip

\noindent \emph{Remark.}
In many applications one wishes to distinguish reflections of a cluster. This framework can be adapted to such a situation, by removing the outer product with $C_2$ in all the groups under consideration: the sticky symmetry group $\mathcal T_{\bfx}^0$ would be the set of permutations that can be achieved by a continuous deformation, and the counting number would be $|\mathcal G| / |\mathcal T^0_{\bfx}| = N! / |\mathcal T^0_{\bfx}|$.


\subsection{Colored particles}\label{sec:col}

The theory developed so far can easily be adapted to colored (partially distinguishable) particles. 
Let $\bfx$ be a cluster with adjacency matrix $A$, and let $V = \{1,2,\ldots,N\}$ be labels, one for each sphere. 
We still assume the spheres have identical radii. 
Suppose $V$ is partitioned into $k$ disjoint subsets $\C = \{ C_i \}_{i=1}^k$, where $k$ is the number of colors, so that $V = \bigcup_{i=1}^k C_i$. 
We denote by $\bfx^{\C}$ the cluster colored according to the partition $\C$. 

A permutation $P \in P(N)$ acts on the partition as 
\begin{equation}\label{eq:part_p}
P \cdot C_i = \{ \pi(j_1^{(i)}), \ldots, \pi(j_{p_i}^{(i)}) \}, \qquad i = 1, \ldots, k,
\end{equation}
where $\pi \in S_N$ is the permutation associated with $P$, and  $C_i = \{ j^{(i)}_{1}, \ldots, j_{p_i}^{(i)} \}$. 
To count the number of distinct copies of $M_{A,\bfx}$ that we obtain by permuting the particles, we now consider only those permutations which preserve the partition $\C$.  
In particular, we define the group 
\begin{equation}\label{eq:pnc}
P(N)^{\C} = \{ P \in P(N) : P \cdot C_i = C_i, \; \forall i=1, \ldots, k\},
\end{equation}
which is a subgroup of $P(N)$, and consider the product group $P(N)^C \times C_2$.

The sticky symmetry group $\T_{\bfx^\C}$ of the colored cluster $\bfx^\C$ is\begin{equation}\label{eq:top_col}
\T_{\bfx^\C} = \{ (P,\delta) \in \T_{\bfx} : P \cdot C_i = C_i, \; \forall i = 1, \ldots, k \}.
\end{equation}
That is, it is those elements of  the sticky symmetry group of the cluster when the spheres are indistinguishable, $\T_{\bfx}$ (see \eqref{eq:top_group}), that preserve the partition. 

  The counting number $n_{\bfx^\C}$ for the colored cluster is then given by 
\begin{equation}
n_{\bfx^\C} = \frac{2|P(N)^\C|}{\sigma_{\bfx^\C}},
\end{equation}
where $\sigma_{\bfx^\C} = |\T_{\bfx^\C}|$. 

It is a useful fact that once the sticky symmetry group has been computed for indistinguishable particles, with $k=1$, we may easily obtain the sticky symmetry group and hence symmetry number for any coloring, simply by checking which elements of $\T_{\bfx}$ preserve the particular color labels.

\subsection{Particles with different radii}\label{sec:rad}

The framework above may also be adapted to spheres with different radii. 
Specifically,  let $(\bfx, \bfr)$ be a cluster in $\R^{3N}$ with radii $\bfr = (r_1, \ldots, r_N)$, of which there are $n_r$ distinct radii. 
There is a natural partition $\mathcal R = \{ R_i \}_{i=1}^{n_r}$ of the spheres $V=\{1,2,\ldots,N\}$ obtained by grouping together spheres with the same radius.  
We may then proceed as in the case of colored particles, substituting $\mathcal R$ for the partition $\C$. 
The only significant change occurs when computing the sticky symmetry group of particles with different radii, since the manifold of configurations $M_A^{(\bfr)}$ depends on the radii (see \eqref{eq:m}). 
Contrary to the case of colored particles, changing the radii $\bfr$ and partition $\mathcal R$ implies a new computation of the sticky symmetry group $\T_{\bfx^\mathcal R}$. 

Finally, if a cluster has particles with different colors and radii, we first determine the sticky symmetry group $\T_{\bfx^\mathcal R}$ according to the partition $\mathcal R$, then check which elements of $\T_{\bfx^\mathcal R}$ preserve the partition into colors $\mathcal C$ as in \eqref{eq:top_col}. We have to make sure the colors respect the radii partition, i.e. two spheres with different radii cannot have the same color.

\section{Numerical algorithm}\label{sec:num}

In this section we provide a numerical algorithm to compute the sticky symmetry group of a cluster $\bfx$, and hence its symmetry number. The algorithm first computes the automorphism and point groups of $\bfx$, and then it checks each automorphism $P$ which is not in the point group to see if either of $(P,+1)$ or $(P,-1)$ are in the sticky symmetry group. To do this, we search for a continuous path in $M_A$ connecting $\bfx$ with $\widetilde{P}\bfx$ and with $-\widetilde{P}\bfx$. If we find a path for either of these cases, we add $(P,\delta)$ to the sticky symmetry group. 
Although there is no guarantee that we will find a path if there is one, we show that the algorithm performs very well on several test cases. 

Here is a summary of our algorithm. 
\begin{enumerate}
\item We compute the automorphism group $\G$ of the cluster (see \eqref{eq:aut}). 
This can be done using graph-theoretical algorithms such as Nauty \cite{nauty}. 
For very small cases one can simply check each of the $N!$ permutations to see if it preserves adjacency. 
\item We compute the point group $\Po_{\bfx}$: we calculate the distance matrix $D_{\bfx}$ of $\bfx$ as in \eqref{eq:dist_mat}, and check which elements in $\G$ preserve the distance matrix (see \eqref{eq:point_2}). We then determine $\delta$. In our implementation we found $\delta$ using the path algorithm specified in Section \ref{sec:paths} below, simply because it was convenient, but there are simpler methods. 
\item We compute the sticky symmetry group $\T_{\bfx}$. 
Since it contains $\Po_{\bfx}$ (see \eqref{eq:incl}), we must only check each element in $\G \times C_2 \setminus \Po_{\bfx}$.  For every element $(P,\delta)$ in this set, we check if there exists a continuous path in the manifold $M_A$ connecting $\bfx$ and $\delta \widetilde{P}\bfx$, using the algorithm specified in Section \ref{sec:paths} below. If such a path exists, we add $(P,\delta)$ to the sticky symmetry group $\T_{\bfx}$. 
\item If the particles are colored with partition $\C$, we compute the sticky symmetry group $\T_{\bfx^\mathcal C}$ as in \eqref{eq:top_col}, by checking which elements of $\T_{\bfx}$ preserve the partition. 
\end{enumerate}

We remark that we can use the fact that $\T_{\bfx}$ is a group to check for errors in our algorithm. If for some $(P,\delta)$ we fail to find a continuous path from $\bfx$ to $\delta \widetilde{P} \bfx$, but one exists, we can sometimes find it afterwards by checking to see if our numerically-computed $\T_{\bfx}$ is a group, for example using group theoretical algorithms such as GAP \cite{gap}. If the elements we have found do not form a group, the algorithm computes the smallest group containing these found elements. 
We found this check particularly useful when developing our path-finding algorithm, where we did sometimes fail to find certain path, but in the path-finding algorithm's current state we haven't found examples where it fails. 
We expect this check to be useful if the error rate is small, as then the check can fill in the rare missing elements. 

We further remark that in computing the sticky symmetry group in step 3  above, it is actually sufficient to find only the generators of the group $\T_{\bfx}$. Specifically, one could just find the elements which, together with $\Po_{\bfx}$, generate (as a group) the sticky symmetry group $\T_{\bfx}$, and then generate the entire group using a group theoretical algorithm. We are not aware, though, of a method to test if an exhaustive set of generators of $\T_{\bfx}$ have been found.

The major part of the algorithm is looking for paths in the manifold $M_A$. We describe a numerical algorithm in the next paragraph. This algorithm may be applied more generally to manifolds defined by equality and inequality constraints, as we will discuss.

\subsection{Finding continuous deformation paths}\label{sec:paths}

We consider a slightly more general setup than that in section \ref{sec:description}. 
Let $M$ be a set in $\R^d$ implicitly defined by 
\begin{equation}\label{eq:md}
M = \{ \bfx \in \R^d : q_i(\bfx) = 0, \; i = 1, \ldots, m, \; h_j(\bfx) > 0, \; j = 1, \ldots, l \},
\end{equation}
where $q_i, h_j : \R^d \rightarrow \R$ are smooth functions. We assume the gradients $\{ \nabla q_i(\bfx) \}_{i=1}^m$ are linearly independent for every $\bfx \in M$, which implies that $M$ is a differentiable manifold of dimension $p= d-m$. Let $\bfx_0,\bfx_1 \in M$ be two distinct points in $M$. Our aim is to determine if $\bfx_0$ and $\bfx_1$ belong to the same connected component of $M$, i.e. if there exists a continuous path $\varphi : [0,1] \rightarrow M$ such that $\varphi(0) = \bfx_0$ and $\varphi(1) = \bfx_1$. 

We start with the following observation. Let's forget about being on a manifold, and suppose we want to minimize in $\R^{d}$ the function 
\begin{equation*}
U(\bfy) = |\bfy - \bfx_1|^2
\end{equation*}
using the method of steepest descent starting from point $\bfx_0$ \cite{kress}. We do this by computing a sequence of points $\bfy_0,  \bfy_1,\ldots $ with $\bfy_0 = \bfx_0$ and 
\begin{equation}\label{seq_sd}
\bfy_{k+1} = \bfy_k - t_k \nabla U(\bfy_k), \quad k = 0, 1, \ldots\,.
\end{equation}
The best choice of  $t_k$ is the one that minimizes the function 
\begin{equation*}
\phi_k(t) = U(\bfy_k-t\nabla U(\bfy_k)) = (1-2t)^2|\bfy_k-\bfx_1|^2.
\end{equation*}
The minimum occurs at $t_k=1/2$, 
which implies that the second point in the sequence is 
$\bfy_{1} = \bfy_0 - \frac{1}{2} \cdot 2 (\bfy_0 -\bfx_1) = \bfx_1$. 
The steepest descent in this case is trivial, leading us in one step to the true minimum. 

Suppose now that we want the sequence \eqref{seq_sd} to approximate a continuous path and to be constrained to the manifold $M$. To approximate a continuous path we may put an upper bound on the step size, say \texttt{tol}. To be constrained to the manifold, we project the steepest descent direction $\bfx_1-\bfy_k$ to the tangent space to the manifold, take a step in this direction, and then project back to the manifold. We choose a projection such that the correction step is perpendicular to the tangent space at $\bfy_k$, so that if there were no limit on the step size, the complete step including tangent step plus projection step would take us directly to $\bfx_1$.  

Specifically,  suppose $\bfy_k \in M$ is the point generated at step $k$, and let $T_kM$ denote the tangent space of $M$ at $\bfy_k$ and $T_k^\perp M$ its complement, the normal space. Let $P_k : \R^d \rightarrow T_kM$ be the matrix which orthogonally projects a vector to the tangent space. 
We generate a sequence of points as 
\begin{equation}\label{eq:yk}
\bfy_{k+1} = \bfy_k + \Delta s\: \bfu_k  + \bfw_k, \qquad k = 0,1, \ldots, 
\end{equation}
where $\Delta s >0$ is the step-size, $\bfu_k\in T_kM$ is a unit vector in the tangent space at $y_k$, and $\mathbf{w}_k \in T_k^\perp M$ is a vector in the normal space at $y_k$. 
The direction in the tangent space is chosen by projecting the steepest descent direction as 
\begin{equation}\label{eq:unit}
\bfu_k = \frac{P_k(\bfx_1-\bfy_k)}{|P_k(\bfx_1-\bfy_k)|}.
\end{equation}
The vector $\bfw_k$ in the normal space is chosen so that $\bfy_{k+1}\in M$. 
Specifically, since $\{\nabla q_{ij}(\bfy_k)\}_{(i,j)\in E}$ spans $T^\perp_k$, we let 
$\bfw = \sum_{j=1}^m a_j \nabla q_j(\bfy_k)$ for some unknown coefficients $\{a_j\}_{j=1}^m$, and then 
solve the system of equations $q_i\left(\bfz+\sum_{j=1}^m a_j \nabla q_j(\bfy_k)\right) = 0$, $i=1, \ldots, m$ using Newton's method. Generically the solutions for $\bfw_k$ will be isolated. 

The step size is chosen as
\begin{equation}\label{eq:ds}
\Delta s = \min \{ \texttt{tol},  |P_k(\bfx_1-\bfy_k)|\}. 
\end{equation}
This choice is motivated by the following observation: if $\Delta s = |P_k(\bfx_1-\bfy_k)|$, then we would find a solution for the normal step is $\bfw_k = P_k^\perp (\bfx_1-\bfy_k)$, where $P_k^\perp$ is the orthogonal projection matrix to $T_k^\perp M$, so $\bfy_{k+1} = \bfx_1$: the algorithm would bring us to the optimal point in one step. 
The step in the optimal direction is simply broken up into a step in the tangent space, and a step in the normal space. We impose an upper bound \texttt{tol} on the step size to approximate a continuous path, but take a smaller step if it would be better.

If ever  $|\bfy_k-\bfx_1|<\texttt{tol}$ the algorithm would take us to $\bfx_1$ in one step, so we stop generating points, and say we have found a path from $\bfx_0$ to $\bfx_1$. 

When the manifold has a boundary $\partial M$, because there are one or more inequalities in \eqref{eq:md}, the sequence $\bfy_k$ may fall outside of the boundary. Therefore, we check at each step if $\bfy_k$ satisfies the inequalities, i.e. if $h_j(\bfy_k) > 0$, for every $j = 1, \ldots, l$. If not, we stop the steepest descent and switch to a different, random method to generate points. 

We switch to random steps if any of the following occur: 
\begin{enumerate}[(a),nosep]
\item A point $\bfy_k$ lies outside the boundary. 
\item The projection onto the manifold fails, i.e. we fail to find a solution $\bfw_k$ to the nonlinear equations. 
\item The point $\bfy_k$ is stuck in a local mininum, i.e. $|\nabla q_{ij} \cdot \bfy_k| < \texttt{tolN}$, for some $i,j = 1, \ldots, N$, and $\texttt{tolN}$ is a numerical tolerance. 
\end{enumerate}
\medskip

If we decide to switch to random steps, we generate some number $\texttt{Nr}$ of random steps, and then resume the steepest descent. 
We tried two different methods for generating random points. 

\begin{enumerate}
\item \emph{Random steps}. Given a point $\bfy_k \in M$, we generate a random vector $\bfv \in T_{k}M$ according to the isotropic Gaussian density
\begin{equation}\label{eq:dens}
p(\bfv) = \frac{1}{(2\pi)^{p/2}\sigma^p}\exp\left(-\frac{|\bfv|^2}{2\sigma^2}\right),
\end{equation} 
where $\sigma >0$ is a parameter and $p$ is the dimension of the manifold $M$. We then project the point  $\bfy_k+\bfv$ back to the manifold using the same method as before and generate the point
$\bfy_{k+1} =  \bfy_k + \bfv + \bfw_k$.
In our implementation, we typically chose $\sigma$ comparable to $\texttt{tol}$.  This could occasionally produce steps that are larger than $\texttt{tol}$, so for a strict tolerance on the step size one might wish to truncate the larger steps.

\item \emph{Sampling}. We sample points from the density
\begin{equation}\label{eq:rho}
\rho(\bfx) = \frac{1}{Z} \exp(-\beta U(\bfx)), \qquad
\text{where }\;\;
U(\bfx) = \left\{ \begin{array}{cl}
|\bfx-\bfx_1|^2 & x \in M \\
\infty & x \notin M
\end{array} \right.
\end{equation}
where $\beta \in \R$ is a real parameter and $Z>0$ is a normalization constant. We sample on the manifold from the density $\rho$ using the Markov Chain Monte Carlo algorithm described in \cite{mcmc} and a step size parameter $\sigma$, again usually comparable to $\texttt{tol}$. The parameter $\beta$ is called the ``inverse temperature" in simulated annealing or other sampling techniques \cite{liu}, and it controls how peaked the density is near the minimum of $U(\bfx)$: large $\beta$ means the density is strongly peaked near the minimum so sampling pushes points toward $\bfx_1$, $\beta\approx 0$  means the density is relatively flat so the manifold is sampled nearly uniformly (similar to the first method but with a Metropolis step to ensure we sample the correct density), and $\beta <0$ means the density is lowest at the optimum, so sampling should push us away from $\bfx_1$ in general. We found that $\beta <0$  helped to overcome boundary obstacles, as we describe later in our numerical experiments. 

\end{enumerate}

We remark that since steepest descent is like sampling with $\beta\to\infty$, our method of switching between steepest descent and random sampling is basically a form of simulated annealing with temperature cycling. 

Note that another option would be to use an active set method \cite{nocedal}, and add in additional equations when one hits a boundary to navigate along the boundary while still performing steepest descent. We leave this option for future work. One may still have to resort to a random method for an arbitrary manifold with an arbitrary boundary.

Our algorithm terminates when one of two conditions is met. One, if ever $|\bfy_k-\bfx_1|<\texttt{tol}$, we stop, and declare that we have found a path. Two, if ever the total number of points exceeds some maximum $\texttt{Nmax}$, we stop, and declare there is no path. 

\medskip

To summarize, our path-finding algorithm consists of the following steps: 
\begin{enumerate}[(i),nosep]
\item Generate points according to steepest-descent algorithm, as in \eqref{eq:yk}. 
\item If one of conditions (a-c) is met, switch to random mode, and generate \texttt{Nr} random steps as in \eqref{eq:dens} or \eqref{eq:rho}. 
\item Stop if either (a) $|\bfy_k-\bfx_1|<\texttt{tol}$ (declare path found), or total number of points exceeds  $\texttt{Nmax}$ (declare no path.) 
\end{enumerate}
\bigskip

The algorithm could fail, either by failing to find a path that exists (false negative), or by finding a path that doesn't actually exist (false positive.) More commonly is to fail to find a path when one exists.  Similar to the problem of finding low-energy paths between points on an energy landscape \cite{Bolhuis:2002ew}, no known algorithm can guarantee to find paths between points on arbitrary nonlinear manifolds, with arbitrarily complicated boundaries to navigate; certainly the longer one looks, the more likely one is to find a path, but we don't expect to ever be able to provide any guarantees. 
It is helpful that we can check a posteriori if $\T_{\bfx}$ is a group, as then we can fill in missing elements if the number of such elements is small. 

The algorithm can also find a path that doesn't exist in the continuum problem, for example if the numerical path jumps between disconnected components of the manifold, a possibility if the components approach closer than the continuity parameter \texttt{tol}. We never found an example where this happened, but it is possible with inappropriate parameter values or a particularly complicated manifold geometry.

\begin{figure}[!t]
\centering
\includegraphics[scale = 0.14]{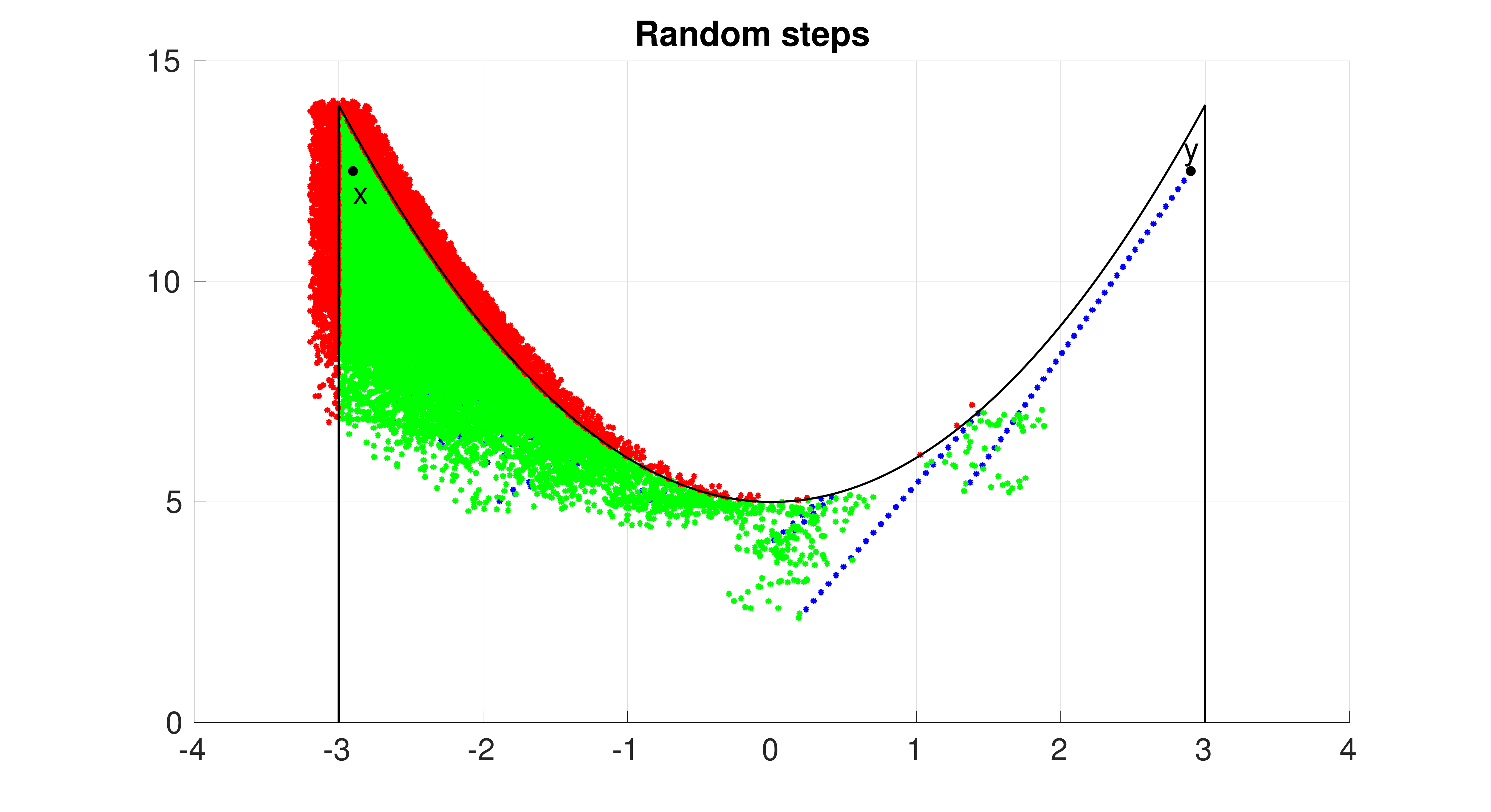}
\includegraphics[scale = 0.14]{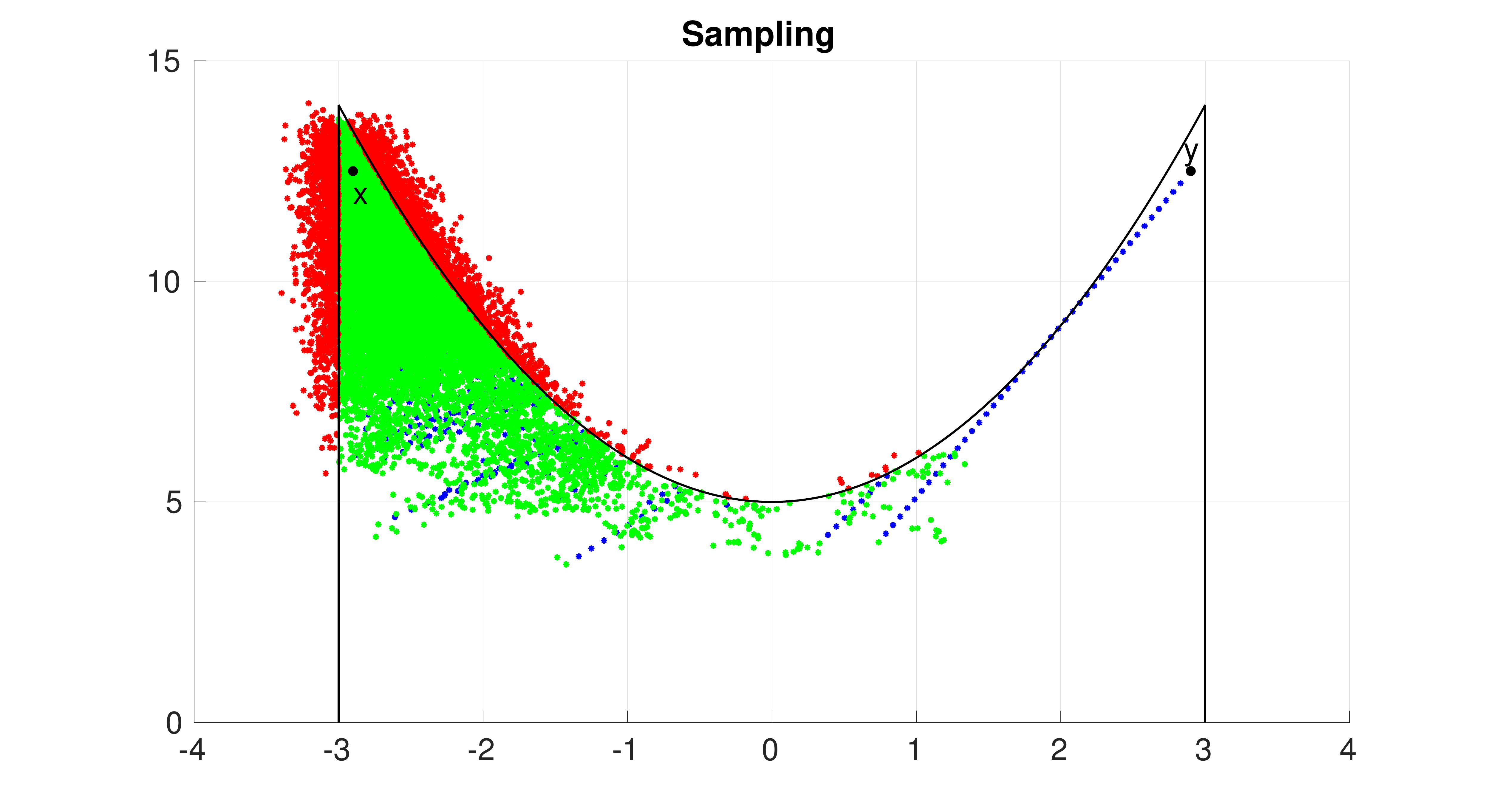}
\caption{Examples of a path connecting two points $\bfx,\bfy$ in the set $D \subseteq \R^2$ given in \eqref{eq:d}, constructed using the algorithm described in Section \ref{sec:num}. The blue points indicate the points obtained using the steepest descent method, while the green ones are generated with a random method: (left) using a Gaussian density and (right) using random sampling according to the distribution in \eqref{eq:rho}. Red points are rejected since they fall outside of the boundary. The total number of points generated is 21,530 (left) and 12,572 (right). The parameters used were $\Delta s = \sigma = \texttt{tol} = 0.2$, $\beta = -0.1$ and $\texttt{Nr} = 50$.   }
\label{fig:path}
\end{figure}

\paragraph{A toy example in $\R^2$.} 

We apply our algorithm to a toy example to illustrate how it works. We consider the set 
\begin{equation}\label{eq:d}
D = \{ (x,y) \in \R^2 : |x|<3, \; 0<y<x^2+5 \}.
\end{equation}
This set is visualized in Figure \ref{fig:path}. We fix the points $\bfx = (-2.9, 12.5)$ and $\bfy = (2.9, 12.5)$, and apply our algorithm to find a path in $D$ connecting $\bfx$ with $\bfy$. The path has to navigate around a boundary, moving in the opposite direction to the steepest descent direction $\bfy-\bfx$ for quite some time, to reach $\bfy$. 
We tested both methods for generating random points, and the resulting paths are shown in Figure \ref{fig:path}. Notice that the sampling with negative parameter $\beta$ results in a path with significantly fewer step points than the method with random steps. This is because the points are ``pushed away" from $\bfy$, and hence from the boundary $\partial M$, resulting in the sequence hitting the boundary fewer times.

\section{Examples}\label{sec:examples}

In this section we apply the theory developed so far to several examples, both to show that it works  and to analyze how well, as well as to point out pedagogical examples that illustrate properties of the sticky symmetry group. In the following, unless specified, we assume that the particles are spheres with diameter $d = 1$. 
In addition, we remove the three translational degrees of freedom from each cluster by fixing the center of mass to the origin, adding the additional three constraints $\sum_{i=1}^N \bfx_i = \mathbf{0}$ to the manifold \eqref{eq:m}. This doesn't change any of the theory regarding the symmetry number or how to compute it. 
In our implementation we usually set $\texttt{Nmax} = 10^4-10^5$.

\begin{figure}[!t]
\centering
\includegraphics[scale = 0.27]{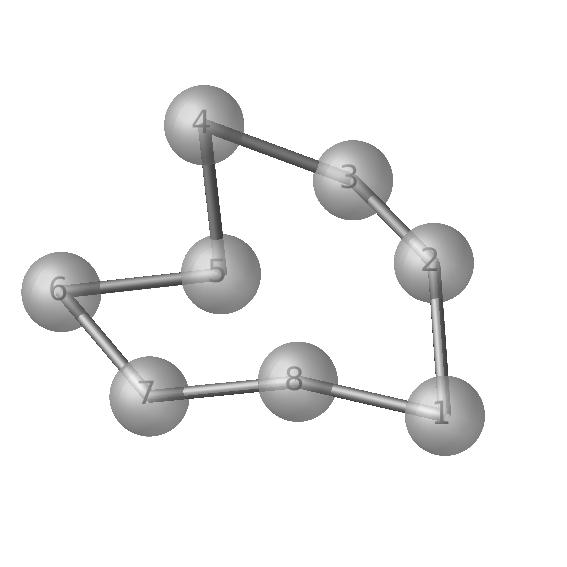}
\includegraphics[scale = 0.27]{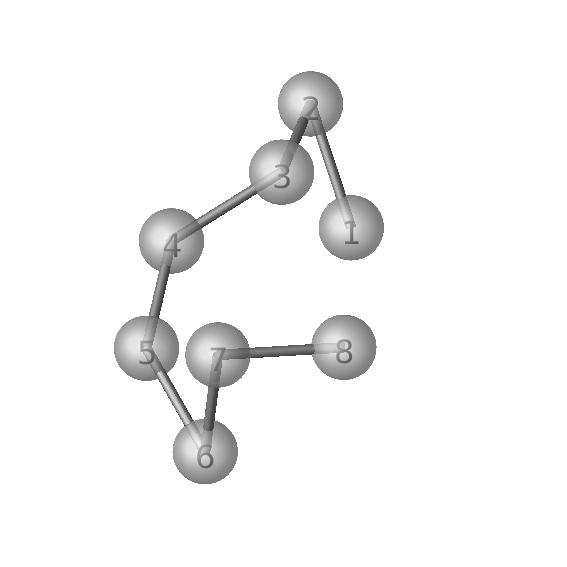}
\caption{A loop (left) and a chain (right) with $N = 8$ identical particles. The spheres are plotted with their radius half of the actual length and contacts are plotted as bars.}
\label{fig:loop_chain}
\end{figure}

\paragraph{Loops and chains.} We consider loops (L) and chains (C) of $N$ identical particles (see Figure \ref{fig:loop_chain}). A loop of $N$ particles has $N$ contacts, while a chain has $N-1$ contacts. These are flexible clusters. In particular, if we denote by $M_{L,N}$ and $M_{C,N}$ the manifold of configurations of a loop and a chain of $N$ particles, respectively, we have $\mbox{dim}M_{L,N} = 2N-3$ and $\mbox{dim}M_{C,N} = 2N-2$. 

We compute the automorphism group, the point group and the sticky symmetry group of loops and chains for  $N=4-10,15,20$. In order to test that our algorithm works, we randomly choose a point $\bfx$ in the manifold of configurations using the sampling algorithm in \cite{mcmc}, and compute $\Po_{\bfx}$ and $\T_{\bfx}$. In this way, $\bfx$ will have no ``a priori" symmetry axes.  For a loop of $N$ particles, the automorphism group $\G_{L,N}$ is isomorphic to the planar dihedral group $D_N$, the symmetry group of a regular $N$-gon in the plane. The point group of a random configuration $\Po_{L,N}$ is trivial (it consists of only the identity $(E, 1)$).
Using our algorithm, with $\sigma=\texttt{tol} = 0.1$, and $\texttt{Nr} = 20$ random steps, we find that $\T_{L,N} = \G_{L,N} \times C_2$, i.e. it is the entire automorphism-inversion group. One can visualize why this must be: starting from a configuration $\bfx \in M_{L,N}$, the loop can deform continuously until it reaches its most symmetrical configuration, a polygon that lies in a plane, and then it rotates according to the permutation $ P \in \G_{L,N}$ in the automorphism group, and finally it deforms again until it reaches the new configuration $\pm \widetilde{P}\bfx$. The symmetry number of a loop of $N$ particles is then $\sigma_{L,N} = |\T_{L,N}| = 4N$, and the counting number is  $n_{L,N} = 2N!/(4N) = (N-1)!/2$.

For a chain of $N$ particles, the automorphism group $\G_{C,N}$ consists of two elements: the identity $E$ and the permutation 
\begin{equation*}
\pi = \left\{ 
\begin{array}{cl}
(1 \; N)(2 \; N-1) (3\;N-2) \ldots \left( \frac{N}{2}  \; \frac{N}{2}+1 \right) & \text{if $N$ is even} \\
(1 \; N)(2 \; N-1) (3 \; N-2) \ldots \left( \frac{N-1}{2}+2\right) & \text{if $N$ is odd}
\end{array} 
\right. 
\end{equation*}
which corresponds to a reflection along the axis passing through the center of mass of the chain, when the chain assumes its most symmetrical configuration, with the spheres lying in a line. The point group of a random configuration is trivial, and again applying our algorithm we find that $\T_{C,N} = \G_{C,N} \times C_2$. Therefore, the symmetry number is $\sigma_{C,N} = 4$ and the counting number is $n_{C,N} = 2N!/4 = N!/2$.

\begin{figure}[!t]
\centering
\includegraphics[scale = 0.23]{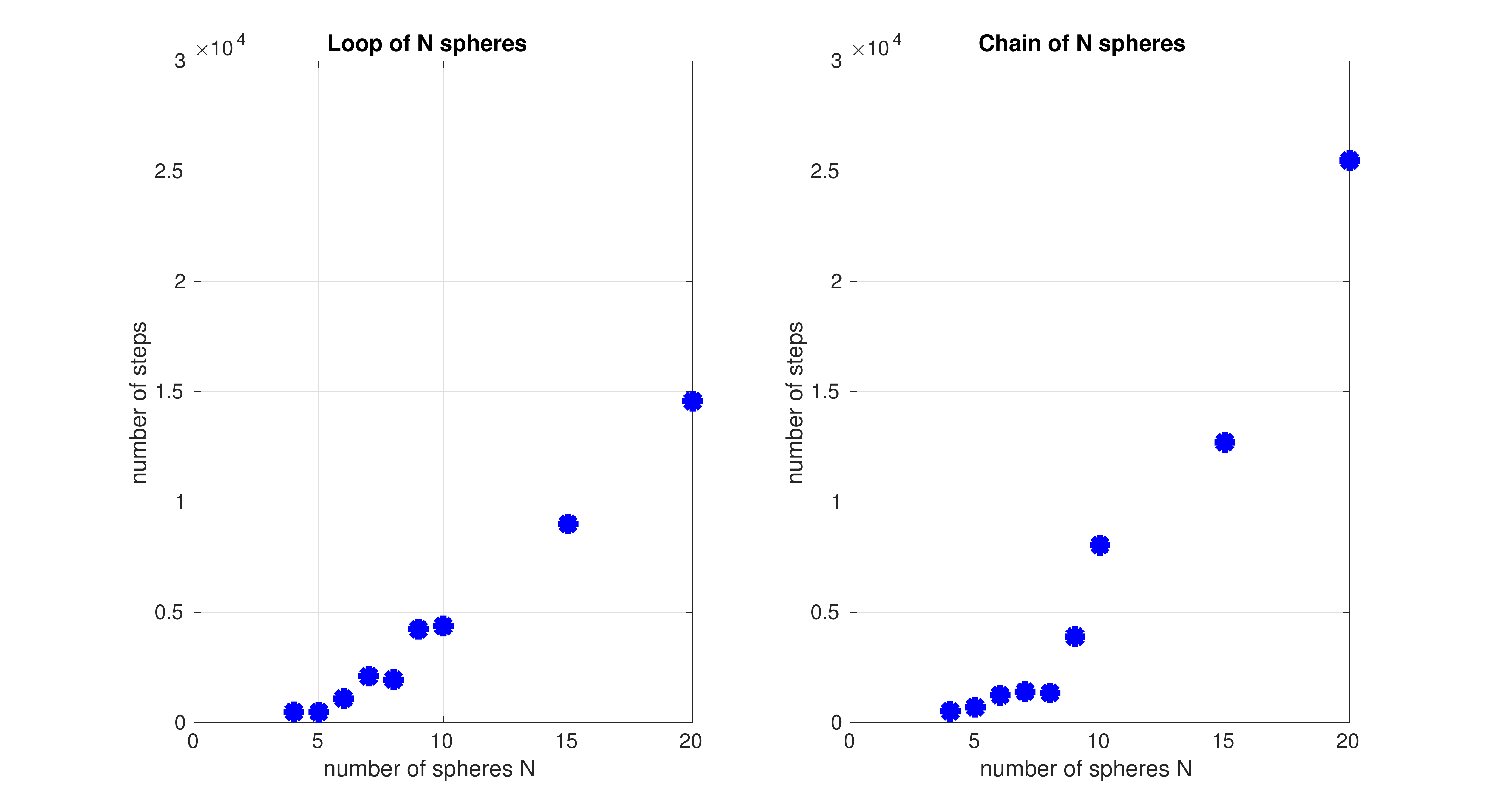}
\caption{Average number of steps for finding a path connecting $\bfx$ with $\pm \widetilde{P}\bfx$, where $P \in \T_{\bfx}$, for loops (L) and chains (C) of $N$ particles.}
\label{fig:steps}
\end{figure}

In Figure \ref{fig:steps} we plot the average number of steps to find a path for loops and chains of different sizes, where the average is over all paths that were successfully found, i.e. all elements in the corresponding sticky symmetry groups (we didn't consider variations with the initial condition or with different realizations of the noise.) The average number of steps increases roughly linearly with $N$, though possibly faster than linearly for chains. 

 We point out that, in the context of molecular symmetry, the loop of $N = 6$ particles corresponds to the benzene molecule.  Our algorithm finds the symmetry number $\sigma_{L,6} = 24$ for the $6$ loop, in agreement with previous results \cite{symmetry1}, where the symmetry number of benzene is computed using GAP\footnote{Actually, the symmetry number for the benzene in \cite{symmetry1} was 12, since the authors only consider permutations, and not permutation-inversion operations.}.

\begin{figure}[!t]
\centering
\includegraphics[scale = 0.23]{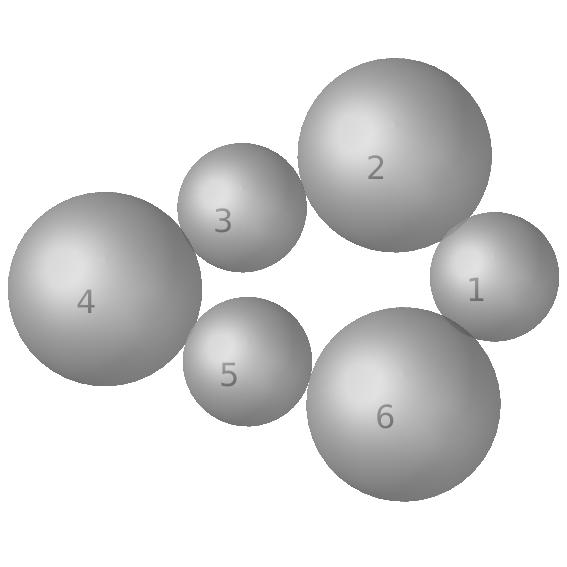}
\caption{A cluster with $N = 6$ spheres with different radii. The spheres $2,4$ and $6$ have radius $r = 0.6$, while the spheres $1,3$ and $5$ have radius $r' = 0.4$. The sticky symmetry group is generated by the rotation $(135)(246)$ and the reflection $(13)(46)$ and consists of six elements.  }
\label{fig:radii}
\end{figure}

\paragraph{Cluster of particles with different radii.} 
We consider again a loop of $N = 6$ particles, but this time we set the radius of the particles $2,4$ and $6$ to be $r = 0.6$, and the radius of particles $1,3$ and $5$ to be $r' = 0.4$ (see Figure \ref{fig:radii}). In this case, we first of all check which element of the automorphism group $\G_{L,6}$ of the loop (with identical particles) preserves the partition $R$ of the vertices according to the radii (compare with \eqref{eq:pnc}). We find that the automorphism group $\G^\mathcal R$ of the cluster is  
\begin{equation*}
\G^\mathcal R = \{ (135)(246), (153)(264), (26)(35), (13)(46), (24)(15), E \}. 
\end{equation*}
In particular, $\G^\mathcal R$ is generated by the rotation $(135)(246)$ and the reflection $(13)(46)$, and it is isomorphic to the dihedral group $D_3$, the symmetry group of an equilateral triangle.  
Using our numerical algorithm (applied to the manifold of configurations with different radii as in \eqref{eq:m}), we find that the sticky symmetry group $\T_{\bfx^\mathcal R}$ corresponds to $\G^\mathcal R \times C_2$.

\paragraph{Cluster of $N = 6$ spheres with two bonds broken.} Consider clusters with $N = 6$ identical spheres. There are two rigid clusters with $m = 12$ contacts, the octahedron and the polytetrahedron \cite{arkus}. We consider all the clusters with $m=10$ bonds, formed by deleting two bonds from a rigid cluster (and keeping only clusters with nonisomorphic adjacency matrices.)
The dimension of each manifold is $18-10-3 = 5$. 
For each cluster we compute its symmetry and counting number. The results are given in Table \ref{2dmodes}, where each cluster is referred to as a ``mode",  adopting terminology from \cite{miranda}. In the computations, we used random steps to explore the boundary, and parameters $\sigma = \texttt{tol} = 0.1$. These results agree with the computations done in \cite{miranda} for the same clusters, where the symmetry number of flexible clusters is computed using a combinatorial argument\footnote{The counting number in \cite{miranda} is actually half the counting number reported here, since the reflections were not taken into consideration.}. 

\begin{table}[!t]
\begin{center}
\begin{tabular}{c c c}
Mode &  Symmetry number $\sigma_{\bfx}$ & Counting number $n_{\bfx}$ \\
\hline
$8$ & 4  & 360 \\
$9$ & 2 & 720 \\
$10$ & 4 & 360 \\
$11$ & 2 & 720 \\
$12$ & 4 & 360 \\
$13$ & 2 & 720 \\
$14$ & 8 & 180 \\
$15$ & 10 & 144 \\
$16$ & 2 & 720 \\
$17$ & 2 & 720 \\
$18$ & 2 & 720 \\
$19$ & 6 & 240 \\
$20$ & 8 & 180 
\end{tabular}
\end{center}
\caption{Symmetry and counting numbers for clusters of $N=6$ 
identical spheres with $m=10$ contacts. The numbering is the same as in \cite{miranda}, where the manifolds are called ``modes'', to facilitate comparison.}
\label{2dmodes}
\end{table}

\paragraph{A cluster of $N = 6$ spheres with $\Po_{\bfx} \subsetneq\T_{\bfx}\subsetneq \G \times C_2$.} 

We consider in detail a cluster $\bfx$ with $N = 6$ spheres and $m=10$ bonds among the ones analyzed above (mode 14 in \cite{miranda}.) It is plotted in Figure \ref{fig:cluster1}, where we show our labeling convention of the spheres. 
This is an example of a cluster where $\Po_{\bfx} \subsetneq\T_{\bfx}\subsetneq \G\times C_2$, and the inclusions are strict. 

Indeed, 
the automorphism group $\G$ consists of 16 elements (we write them as permutations of the labeled particles):
\begin{align*}
\G  = & \{ E, (14)(23)(56), (14)(23), (1234)(56), (1234), (13)(24)(56), (13)(24), (13)(56), \\
& (13), (1432)(56), (1432), (12)(34)(56), (12)(34), (24)(56), (24), (56) \},
\end{align*}
where $E$ denotes the identity element. 

We compute the point group $\Po_{\bfx}$ of the cluster by checking which of the permutations in $\G$ preserve the distance matrix $D_{\bfx}$ (compare with \eqref{eq:PG} and \eqref{eq:dist_mat}), and additionally determining the associated inversion. This consists of four permutation-inversions: 
\begin{equation*}
\Po_{\bfx} = \{E, (12)(34)^*, (56)^*, (12)(34)(56)\},
\end{equation*}
where the $*$ indicates that the permutation is combined with a reflection $(\delta = -1)$. 
This group is generated by one reflection, and one rotation along the axis passing through the particles $5$ and $6$ and the plane which contains the particles $1,2,3,4$. 

We compute the sticky symmetry group by checking which elements in $\G\times C_2 \setminus \Po_{\bfx}$ belong to the sticky symmetry group. For any $P \in \G\times C_2\setminus \Po_{\bfx}$, we use our numerical procedure to check if there is a continuous path in the manifold of configurations $M_A$ connecting $\bfx$ with $\pm \widetilde{P}\bfx$, where $A$ is the adjacency matrix of $\bfx$. We set $ \texttt{tol} = \sigma = 0.1$, and we adopt random sampling with $\beta = -0.1$. We find the sticky symmetry group $\T_{\bfx}$ consists of eight elements
\begin{equation*}
\T_{\bfx} =\{E, (12)(34)^*, (56)^*, (12)(34)(56), (14)(23)(56)^*, (14)(23), (13)(24)(56), (13)(24)^*\}.
\end{equation*} 

Therefore $\Po_{\bfx} \subsetneq\T_{\bfx}\subsetneq \G\times C_2$. Because there exists an element $P \in \G$ such that $(P,\delta) \notin \T_{\bfx}$, for any $\delta \in C_2$, the manifold $M_A$ is not connected and the disconnected components are not related by a reflection. For example, the permutation $(24)$ which switches particles $2$ and $4$ preserves adjacency, however it is not realizable as a continuous deformation or rotation or reflection of the cluster. 

\paragraph{A colored cluster.}
Next, we consider the same example as above but distinguish the particles using three different colors (Figure \ref{fig:cluster1}, right). We form the partition $\C =\{C_1,C_2,C_3\}$ with  $C_1 = \{1,2\}$, $C_2 = \{ 3,4\}$ and $C_3 = \{5,6\}$. We compute the sticky symmetry group $\T_{\bfx}^\C$ of the colored cluster by checking which elements in $\T_{\bfx}$ preserve the partition $\C = \{C_i : i =1,2,3\}$ (compare with \eqref{eq:part_p} and \eqref{eq:top_col}.) We have that $\T_{\bfx}^\C$ coincides with the point group $\Po_{\bfx}$ of the original cluster $\bfx$, since all the elements of $\Po_{\bfx}$ preserve the partition into colors. 

We point out that, in the case of colored particles, we only need to compute the sticky symmetry group $\T_{\bfx}$ for identical particles once, then check which elements of it preserve the coloring. In other words, we can change the partition without having to check again the connectivity of the original manifold of configurations $M_A$. This is not true if we change the radii, because then the manifolds themselves will change, so the paths need to be recomputed. 

\begin{figure}[!t]
\centering
\includegraphics[scale = 0.25]{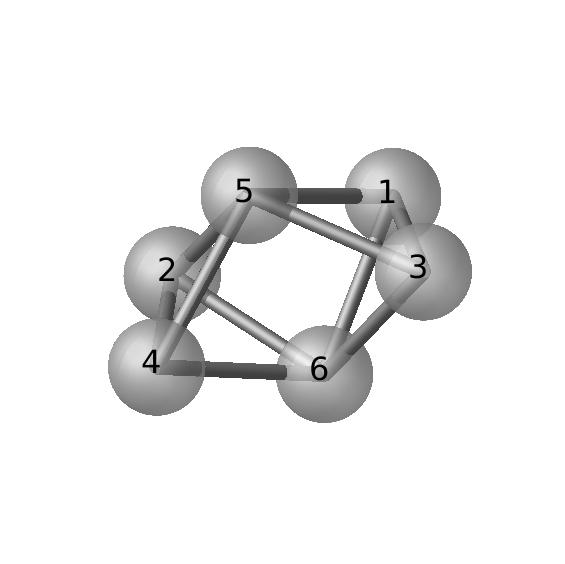}
\includegraphics[scale = 0.25]{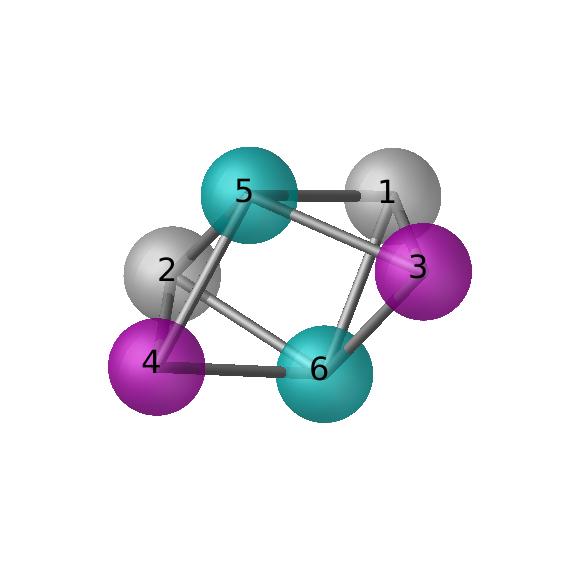}
\caption{Left: a symmetrical flexible cluster with $N = 6$ identical spheres. Right: the same cluster, with the particles distinguished using three different colors.  The spheres are plotted with their radius half of the actual length, to better visualize the bonds. }
\label{fig:cluster1}
\end{figure}

\begin{figure}[!t]
\centering
\includegraphics[scale = 0.29]{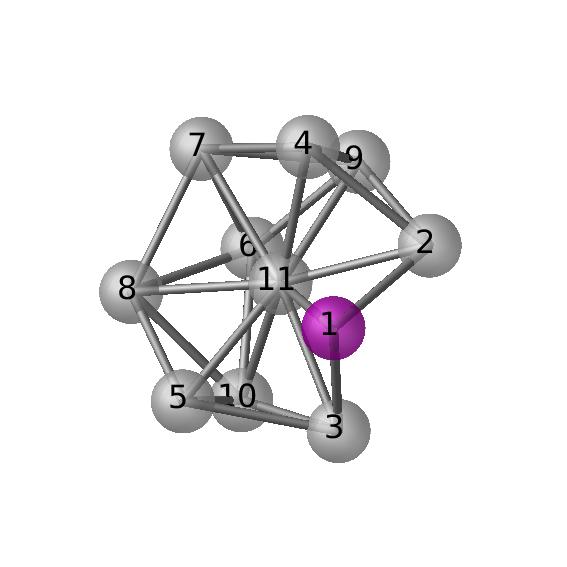}
\includegraphics[scale = 0.29]{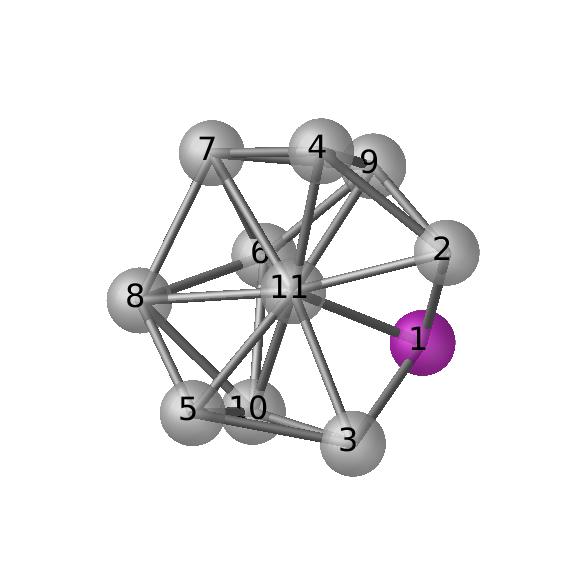}
\caption{Two rigid clusters with $N = 11$ identical particles which belong to different connected components of the same manifold of configurations. The gray spheres have identical coordinates, while the red particles (labeled 1) forms three contacts with spheres 2,3 and 11 in two different ways (see also \cite{siam}).}
\label{fig:11}
\end{figure}

\paragraph{Two clusters of $N = 11$ spheres where $|\T_{\bfx}|= |\G|$ but $M_{A}$ is disconnected.} 

We consider in detail two rigid clusters $\bfx_1$ and $\bfx_2$  (discovered in \cite{siam}) with $N = 11$ particles, which have the same adjacency matrix $A$, however there is no continuous transformation from one to the other or its reflection (see Figure \ref{fig:11}). This implies that they belong to different connected components of the manifold of configurations $M_A$, i.e. $M_{A,\bfx_1} \neq M_{A, \bfx_2}$.  

We use these clusters to build two flexible clusters with the same adjacency matrix, such that the sticky symmetry group has the same size as the automorphism group\footnote{
Note that we don't say they are the same because the sticky symmetry group includes inversion operations whereas the automorphism group does not; see \eqref{eq:incl}.
}, but that live on disconnected components of $M_A$. To do this we break one bond (bond $6-11$) in each cluster $\bfx_1$ and $\bfx_2$. The resulting two flexible clusters, which we denote by $\bfy_1$ and $\bfy_2$, have the same adjacency matrix $A'$ so they belong to the same manifold $M_{A'}$. However, we do not find a continuous path in $M_{A'}$ connecting $\bfy_1$ with either $\pm\bfy_2$, which implies that they belong to two different connected components of $M_{A'}$ that are not related by a reflection. The automorphism group $\G$ of $\bfy_1$ and $\bfy_2$ is given by 
\begin{equation*}
\G = \{E, (23)(45)(78)(9 \; 10)\}.
\end{equation*}
Using our algorithm, we find that $\T_{\bfy_1}=\T_{\bfy_2}$, and every element in $\G$ induces a single permutation-inversion operation in the sticky symmetry groups $\T_{\bfy_1},\T_{\bfy_2}$. 

Therefore, the manifold $M_{A'}$ is not connected, even if $|\T_{\bfy_1}|=|\T_{\bfy_2}|=\G$.   
Note that, in contrast to the previous example, we expect the two distinct components to be nonisomorphic, since clusters on each component are not related even by a permutation-inversion operation; we verified this heuristically by building physical models of the clusters with balls and magnetic sticks.

\paragraph{Partition functions of every connected cluster of $N=6$ identical spheres}

\begin{figure}
\begin{center}
{\footnotesize
\begin{tabular}{c c c}
$d$ & \# of clusters &  $Z_d$  \\\hline
0 & 2 & 3.9 \\
1 & 5 & 28\\
2 & 13 & 128\\
3 & 19 & 574\\
4 & 22 & $2.2\times 10^{3}$\\
5 & 19 & $4.7\times 10^{3}$\\
6 & 13 & $7.5\times 10^3$\\
7 & 6 & $7.2\times 10^3$\\\hline
\end{tabular}
}\hfill
\raisebox{-2.75cm}{\includegraphics[width=0.65\linewidth, trim=8cm 0cm 8cm 0cm, clip]{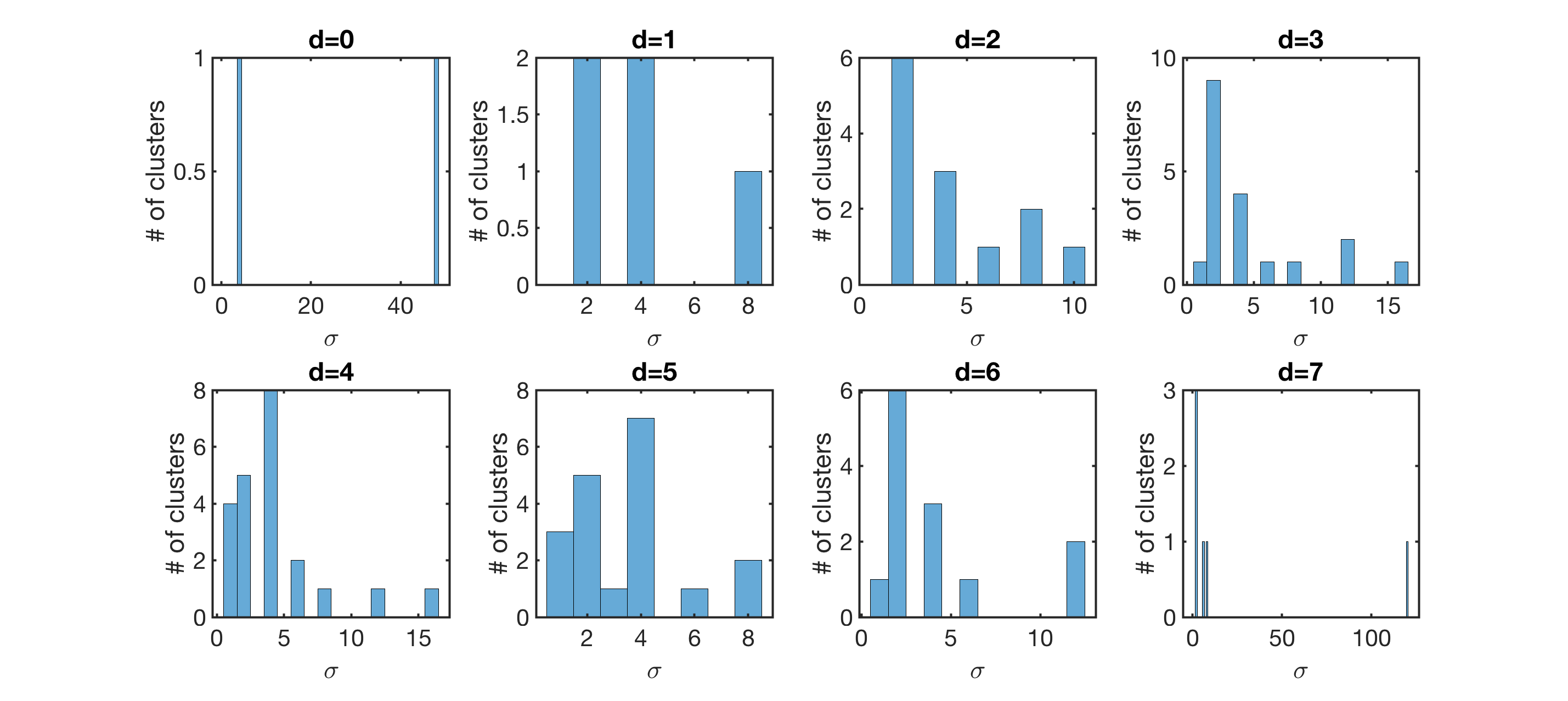}}
\end{center}
\caption{Left: Summary of the partition function calculations for $N=6$ identical spheres. For clusters with $d$ internal degrees of freedom, the table reports the total number of clusters, and the total geometrical partition function $Z_d$, equal to the sum of the geometrical partition functions of all clusters with dimension $d$.
Right: histograms of symmetry numbers for clusters of each dimension.}
\label{fig:n6}
\end{figure}

\begin{figure}
\begin{center}
\includegraphics[width=0.45\linewidth]{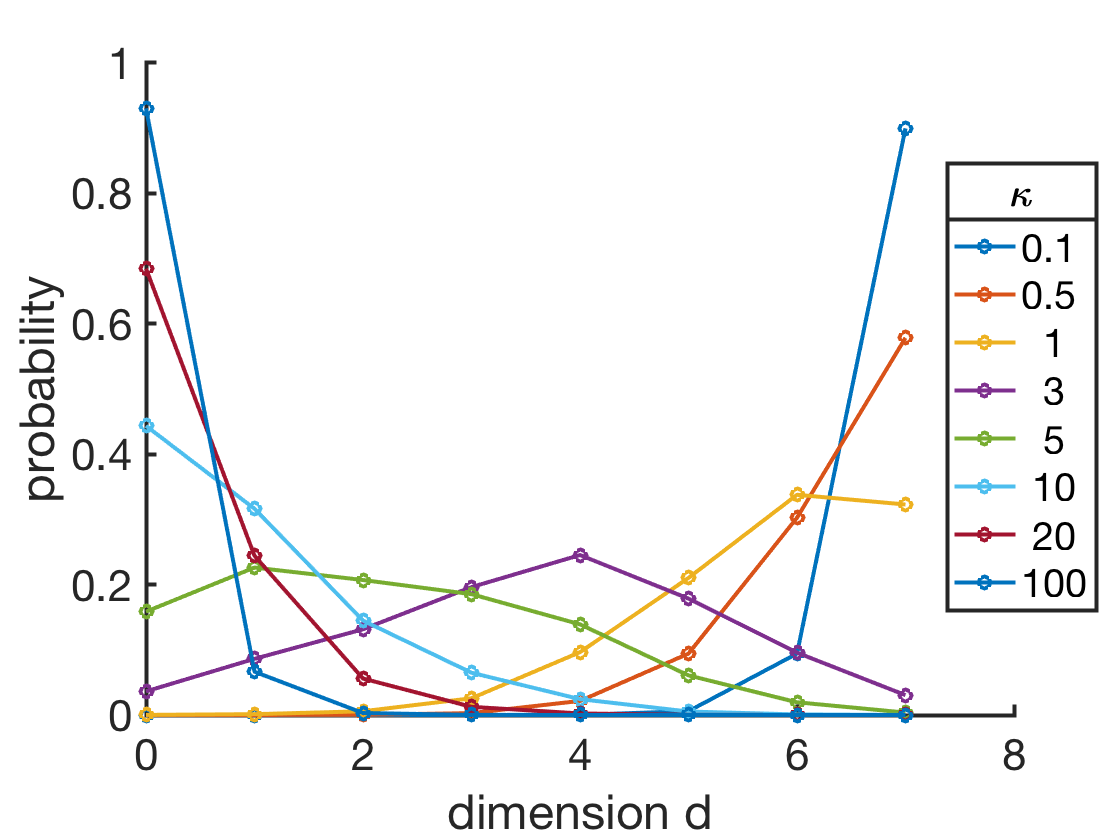}\hfill
\includegraphics[width=0.45\linewidth]{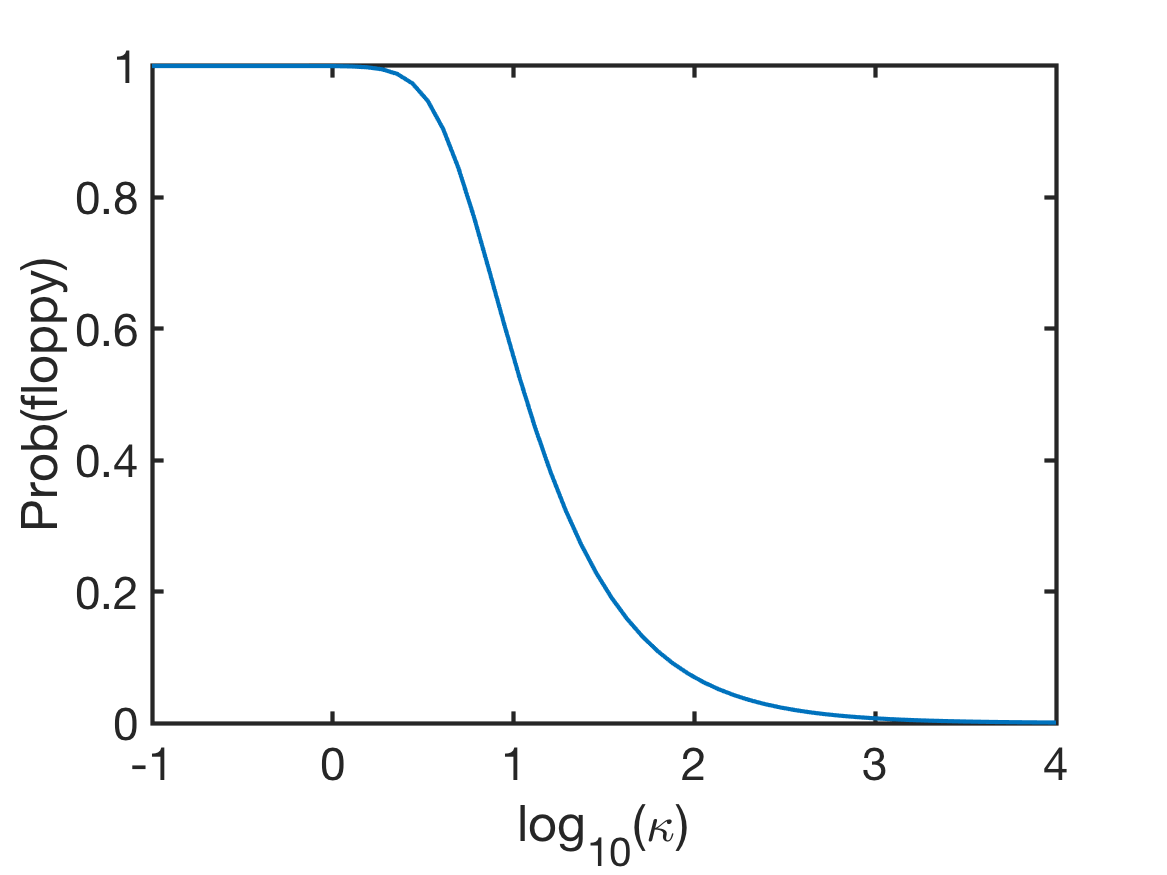}
\end{center}
\caption{Left: equilibrium probability of observing a system of $N=6$ identical spheres in a cluster with $d$ internal degrees of freedom, conditional on the cluster being connected, at different values of the sticky parameter $\kappa$. 
Right: equilibrium probability of finding the system in a floppy configuration ($d>0$) as a function of $\kappa$. 
}\label{fig:n6b}
\end{figure}

As a final example we compute the partition function \eqref{Zx} of every connected cluster of $N=6$ indistinguishable sticky spheres, at every temperature. With these partition functions in hand, one can determine essentially any equilibrium quantity one wants at any temperature for a system of 6 spheres (conditional on the spheres being connected), and, with further similar computations, one could begin to ask about non-connected systems, non-identical spheres, and spheres with non-identical interactions. Performing such an exhaustive calculation of the free energy landscape for sticky particles is an accomplishment in itself. Such an exhaustive calculation is not possible for non-sticky particles, i.e. those with smoother interaction potentials, where free energy landscapes are typically characterized by a collection of local minima and saddle points \cite{wales},  which vary with temperature and cannot be guaranteed to be found by existing numerical algorithms. 

We start with the two rigid clusters, which are known to be the only clusters with $m=12$ contacts \cite{Meng:2010gsa,siam}. On each cluster, we break, in turn, each bond, each pair of bonds, each triple of bonds, etc. For each graph $\alpha$ so obtained we check that it is connected, and that it is not isomorphic to a graph we have already seen.\footnote{For larger systems one would also need to check that each isomorphic graph is also on the same connected component of $M_{A,\bfx}$ (after applying the permutation), but we don't anticipate disconnected copies to be a problem in such a small system.} If it passes both tests, then we compute the integral in \eqref{Zx} using the method described in \cite{mcmc} (with center of mass of the cluster fixed), which is essentially a form of thermodynamic integration on a manifold, and call the resulting quantity $I_\alpha$. Then we calculate the symmetry number $\sigma_\alpha$ using the method described in this paper. The so-called ``geometrical'' partition function for graph $\alpha$ (i.e. the partition function without the factors of $\kappa$) is, up to a constant that is the same for all clusters under consideration, 
$z^g_\alpha = I_\alpha/\sigma_\alpha$. 

The table in Figure \ref{fig:n6} reports the total number of clusters found for each dimension $d$, where $d$ is, equivalently, the number of bonds broken from a rigid cluster, the number of internal degrees of freedom of the cluster, or the dimension of the corresponding manifold $M_{A,\bfx}$ in \eqref{Zx} (minus 3, to account for rotational degrees of freedom.) 
The number of clusters increases up to $d=4$, and then decreases to $d=7$, for a total of 99 clusters; for $d>7$ there are no  connected clusters. 
This table also reports the total geometrical partition function $Z_d$ for clusters with dimension $d$, formed by summing the geometrical partition functions $z^g_\alpha$ over all graphs $\alpha$ with dimension $d$. 
The ratios $Z_1/Z_0 = 7.1$, and $Z_2/Z_1 = 4.6$, agree within numerical error with those reported in \cite{miranda} (7.1, 4.5 respectively), partially verifying that our computations are correct; it is worthwhile to note that the methods used to compute $Z_0,Z_1,Z_2$ in \cite{miranda} cannot be extended to higher-dimensional clusters. 

Figure \ref{fig:n6} also shows the histograms of symmetry numbers for clusters of each dimension. The most symmetric cluster, at $d=7$, has a symmetry number $\sigma=120$. This is the cluster formed when 5 spheres touch a central sphere with a single contact each. The elements in the symmetry group are all of the $5!$ permutations of the outer spheres. 

The actual partition function for any cluster with graph $\alpha$ is obtained from the geometrical partition function as $z_\alpha = \kappa^m z^g_\alpha$, where $m=3N-d$ is the number of contacts in the cluster, and $\kappa$ is the sticky parameter, which measures how strong the bonds are between spheres: large $\kappa$ means spheres like to spend more time sticking together; $\kappa$ small means they come apart easily. For an experimental system where the pair interactions are not perfect delta functions, $\kappa$ would be a function of temperature, and the width and depth of the actual attractive interaction potential between the particles \cite{miranda}. It is large when temperature is small, and/or when the pair potential is deep and wide. Experimentally, clusters tend to form and rearrange when $\kappa \approx O(10)-O(100)$; any smaller, and the system evaporates, and any larger, and it takes a very long time for clusters to rearrange \cite{Perry:2015ku}. 

From the partition functions one can calculate the equilibrium probability to find the system in any given configuration. For example, 
Figure \ref{fig:n6b} (left) shows the equilibrium probabilities of finding the system in a cluster of a given dimension, at different values of $\kappa$ (recall this probability is conditional on the cluster being connected), equal to  $\kappa^{18-d}Z_d / Z$, where $Z=\sum_{d=0}^7 \kappa^{18-d}Z_d$ is the total partition function for the system. For large $\kappa$, the system is most likely to be in the lowest-dimensional configurations (those with the most contacts), while for small $\kappa$, it is likely to be in the highest-dimensional ones. For $\kappa\approx 1-7$, the probability has a maximum at an intermediate value of $d$; a range of $\kappa$ that coincides with the range of ratios $Z_{d+1}/Z_d$ (from smallest to largest $d$: 7.1, 4.6, 4.5, 3.8, 2.2, 1.6, 1.0.) 

Figure \ref{fig:n6b} (right) shows the probability of finding the system in a floppy configuration ($d>0$) as a function of $\kappa$. This probability is very close to 1 for small $\kappa$, and very close to $0$ for large $\kappa$, however it has a notably wide transition region, $\kappa\approx 4-70$, where the probability is not close to either endpoint (between 0.1-0.9); it crosses $0.5$ at $\kappa\approx 11.5$. In this transition region one would expect interesting dynamics, with the system forming clusters but rearranging substantially on observable timescales; hence, this is the range of $\kappa$ values one should aim for experimentally.

\section{Conclusion}\label{sec:conclusion}

We developed a theoretical and computational framework to compute the symmetry number of flexible sticky-sphere clusters, i.e. hard spheres interacting with a delta-function attractive interaction potential. 
We started from an equivalence relation which says that two clusters are the same if they are related by any combination of rotations, translations, or deformations that don't change the sphere-sphere contacts, 
 transformations which are all continuous and preserve the energy of a sticky-sphere cluster. 
 We showed how to count the number of distinct equivalence classes that one obtains by considering all permutations and reflections of a given cluster, a number we called the counting number. The counting number is related to a cluster's symmetry number, which is in turn obtained from its sticky symmetry group. 
We analyzed the sticky symmetry group and showed how it is related to two other groups commonly used to study molecular symmetries, the point group and the automorphism group of the graph describing the cluster's pairwise contacts. 

We introduced a numerical algorithm to compute the sticky symmetry group of a cluster. The key part of the algorithm is finding continuous paths in the manifold of configurations of the cluster. For this, we provided a numerical procedure alternating between steepest descent and random sampling, a form of simulated annealing. The algorithm comes with no guarantees, and could produce both false positives or false negatives, however we found with the right choice of parameters it worked extremely well for small clusters. In addition, we can use the fact that the collection of paths forms a group, to catch rare false negatives, i.e. paths that aren't found by the algorithm. An interesting question for future research, would be for what false negative rate is it possible to compute the entire group with high probability. 

We tested our algorithm on small clusters, some with up to $N=20$ spheres. 
Our algorithm is efficient when it is possible to compute the automorphism group $\G$, and gives the symmetry number many times faster than any hand computation would. 
For larger clusters or those with very high symmetry, several problems can occur. First, the automorphism group $\G$ can be prohibitively large to compute; for example \cite{symmetry3} gives an example of a cluster of 33 particles whose automorphism group has more than $10^{35}$ elements. Nevertheless, it may be possible to overcome this challenge by adapting algorithms for rigid molecules that cleverly avoid  computing the entire automorphism group \cite{symmetry2}. Secondly, as a cluster becomes larger, the geometry of its manifold can become more complicated, so it can require longer to find paths along it. 
Finally, for large clusters, there are a large number of constraints defining the manifold \eqref{eq:m}, so computing the tangent space, and projecting back to the manifold after taking a step in the tangent space, are more computationally expensive.

Our method can be used as an essential component of an algorithm that computes the entire free energy landscape of small clusters. As a step in this direction, we computed the partition functions for every possible connected cluster of $N=6$ identical spheres. One could perform a similar calculation for any $N\leq 8$, and for spheres with different radii, however for identical spheres larger $N$ would require dealing with contact constraints that are not linearly independent, leading to singularities in the manifolds of configurations that are challenging (though not impossible) to deal with numerically \cite{clusters,Kallus:2017hi}. Of course, for large enough $N$ one cannot exhaustively enumerate every configuration, but nevertheless symmetry or related topological considerations can be important in detecting different regimes of behaviour in systems of hard particles \cite{Carlsson:2012fo,Martiniani:2016bt}.


With such data, one can then ask how the free energy landscape varies as one varies the pairwise interactions between spheres. 
This question is important in materials science, where one might want to design particles, such as colloids, to assemble into a particular cluster, by varying the strengths and specificity of the interactions \cite{hormoz,Zeravcic:2014it}, control that is possible by coating colloids with strands of sticky DNA \cite{Wang:2015ep,Rogers:2016bd}.  In the sticky-sphere limit, such variation in interactions would correspond to changing the colorings of the particles and the strengths of the interactions between colors, represented by the coloring partition $\C$ and the sticky parameters $\kappa_{ij}$ where $i,j$ are colors (see \eqref{Zx}, and discussion thereafter.) 
Our method leads to an efficient way to solve this problem, since 
once we have calculated the integrals in \eqref{Zx} and the sticky symmetry groups $\T_{\bfx}$ for each connected component $M_{A,\bfx}$ of the landscape, we may 
obtain the partition functions for all clusters, for any coloring of the particles and any interaction strengths between colors, without repeating these arduous computations. 
Therefore, we expect the framework outlined in this paper to be an important component of an algorithm which aims to efficiently compute cluster probabilities over a wide range of interactions strengths and specificities, and to solve the inverse problem of asking how to make a particular cluster with high probability, given experimental constraints.

\paragraph{Acknowledgments.} 
We thank Louis Theran for useful discussions.  
This material is based upon work supported by the U.S. Department of Energy, Office of Science, Office of Advanced Scientific Computing Research under award DE-SC0012296.
M.H.-C. acknowledges support from the Alfred P. Sloan Foundation.

\section{Appendix: proofs}

In this Appendix we provide the proofs of some statements of Section \ref{sec:count}. 

First of all, while it seems like an obvious fact, we will need to know that if we apply a permutation-inversion $(P,\delta)$ to a cluster $\bfx$, we obtain a manifold $M_{PAP^T,\delta \widetilde{P}\bfx}$ which is an isometry of $M_{A,\bfx}$, i.e. it is geometrically indistinguishable from it. We prove this statement in the following lemma, which in addition will imply that continuous paths in $M_{A,\bfx}$ get mapped to continuous paths in $M_{PAP^T,\delta \widetilde{P}\bfx}$ after applying the permutation-inversion operation $(P,\delta)$ to each element along the path. 

\begin{lemma}\label{lem:isometry}
Let $\bfx \in \R^{3N}$ be a cluster of $N$ identical spheres. Let $\bfx' = \delta \widetilde{P}\bfx$, with $P \in P(N)$ and $\delta \in C_2$. Let $A'$ denote the adjacency matrix of $\bfx'$. The map 
\begin{equation}\label{eq:phi}
\begin{aligned}
\phi_{P, \delta} : M_{A, \bfx} & \longrightarrow M_{A',\bfx'} \\
\bfy & \longmapsto \delta \widetilde{P} \bfy
\end{aligned}
\end{equation}
is an isometry.
\end{lemma}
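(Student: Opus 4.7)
The plan is to recognize that $\phi_{P,\delta}$ is the restriction to $M_{A,\bfx}$ of a single linear orthogonal transformation of the ambient space $\R^{3N}$. Since both $M_{A,\bfx}$ and $M_{A',\bfx'}$ carry the Riemannian metric induced from the standard Euclidean inner product, any ambient Euclidean isometry whose restriction is a bijection between them is automatically a Riemannian isometry. So the task reduces to (a) showing the ambient map is orthogonal, (b) showing it maps $M_{A,\bfx}$ bijectively to $M_{A',\bfx'}$.

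For (a), I would set $T := \delta\widetilde{P} = \delta(P\otimes I_3)$ and compute
\begin{equation*}
T^T T = \delta^2 (P\otimes I_3)^T(P\otimes I_3) = (P^T P)\otimes I_3 = I_N\otimes I_3 = I_{3N},
\end{equation*}
using $\delta^2 = 1$ and the fact that permutation matrices are orthogonal together with the mixed-product rule for Kronecker products. Hence $|T\bfy - T\bfz| = |\bfy - \bfz|$ for all $\bfy,\bfz \in \R^{3N}$.

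For (b), I would first check that $T$ preserves the defining constraints. Writing $\widetilde{P}\bfy$ componentwise as $(\widetilde{P}\bfy)_i = \bfy_{\pi^{-1}(i)}$ for $\pi$ the permutation associated with $P$, and using $|\delta \bfv - \delta \bfw| = |\bfv - \bfw|$, one obtains $q_{ij}(T\bfy) = q_{\pi^{-1}(i)\pi^{-1}(j)}(\bfy)$. Since $A' = PAP^T$, the entry $A'_{ij}$ equals $A_{\pi^{-1}(i)\pi^{-1}(j)}$, so the equalities and strict inequalities defining $M_A$ carry over verbatim to the ones defining $M_{A'}$; hence $T\bfy \in M_{A'}$. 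To land in the correct connected component $M_{A',\bfx'}$, I would take the continuous path $\varphi:[0,1]\to M_A$ from $\bfx$ to $\bfy$ that exists by definition of $M_{A,\bfx}$, and observe that $T\circ \varphi$ is continuous (as $T$ is linear), lies in $M_{A'}$ pointwise by the argument just given, and connects $T\bfx = \bfx'$ to $T\bfy$. Bijectivity then follows immediately since $T^{-1} = \delta(P^T\otimes I_3)$ is a map of the same form and the roles of $(A,\bfx)$ and $(A',\bfx')$ are symmetric, so $T^{-1}$ sends $M_{A',\bfx'}$ back into $M_{A,\bfx}$.

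There is no real analytic obstacle, since $T$ is a global affine map; the only step that requires a moment of care is the bookkeeping showing $q_{ij}\circ T = q_{\pi^{-1}(i)\pi^{-1}(j)}$ and that this is consistent with $A' = PAP^T$, but this is a direct index-chasing calculation. Once (a) and (b) are in place, the lemma is immediate. As a small bonus, the same argument shows that continuous paths in $M_{A,\bfx}$ push forward to continuous paths in $M_{A',\bfx'}$, which is the property needed when this lemma is invoked later to justify the appearance of $n_{\bfx}$ in the partition function.
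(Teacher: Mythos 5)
Your proof is correct and follows essentially the same route as the paper's: the heart of both arguments is the orthogonality computation $\delta^2\widetilde{P}^T\widetilde{P}=I_{3N}$, which you package as an ambient Euclidean isometry restricted to the submanifolds with induced metric, while the paper verifies the same identity on the tangent map $T\phi_{P,\delta}(\bfv)=\delta\widetilde{P}\bfv$. You additionally spell out well-definedness and surjectivity onto the correct connected component (via the index bookkeeping $q_{ij}\circ T = q_{\pi^{-1}(i)\pi^{-1}(j)}$ and the pushed-forward path), details the paper declares ``clearly smooth and bijective'' and only makes explicit later in the proposition on the group action -- a harmless, slightly more careful elaboration of the same argument.
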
 

\begin{proof}
The function $\phi_{P,\delta}$ is clearly smooth and bijective, with inverse $\phi_{P, \delta}^{-1} = \phi_{P^{-1}, \delta^{-1}}$. Let $T_{\bfx}M_A$ and $T_{\bfx'}M_{A'}$ denote the tangent spaces at $\bfx$ and $\bfx'$ of the manifolds $M_A$ and $M_{A'}$, respectively. 
Because $\phi_{P,\delta}$ is a linear function, 
the tangent map $T\phi_{P,\delta} : T_{\bfx}M_A \longrightarrow T_{\bfx'}M_{A'}$ may be easily calculated to be $T\phi_{P,\delta}(\bfv) = \delta \widetilde{P}\bfv$. 
We have, for every $\bfv, \bfu \in  T_{\bfx}M_A$, 
\begin{equation*}
\langle T\phi_{P,\delta}(\bfv), T\phi_{P,\delta}(\bfu)\rangle_{\bfx'} = \langle \delta \widetilde{P}\bfv, \delta\widetilde{P}\bfu \rangle = \delta^2 \bfv^T \widetilde{P}^T \widetilde{P}\bfu = \bfv^T \widetilde{P^TP} \bfu = \langle \bfv, \bfu \rangle_{\bfx},
\end{equation*}
where $\langle, \rangle_{\bfx}$ (respectively $\bfx'$) is the standard Euclidean metric in $\R^{3N}$ restricted to in $M_A$ (respectively $M_{A'}$.) This proves our claim. 
\end{proof}


This lemma allows us to prove that the function given in \eqref{eq:action} is in fact a well-defined action of $P(N) \times C_2$ on the quotient set $X$ given in \eqref{eq:Xr}. 

\begin{prop}
The function $\cdot : (P(N) \times C_2) \times X \longrightarrow X$ given by 
\begin{equation*}
(P,\delta) \cdot [\bfx] = [\delta \widetilde{P}\bfx] = M_{PA(\bfx)P^T, \delta \widetilde{P}\bfx}
\end{equation*}
is a well-defined action of $P(N) \times C_2$ on the quotient set $X$.
 \end{prop}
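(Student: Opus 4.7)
The plan is to verify two things in turn: that $\cdot$ is well-defined on equivalence classes, and that it satisfies the two group-action axioms (identity and associativity/compatibility with the group law of $P(N)\times C_2$). Before either, I will observe that $\delta\widetilde P\bfx\in Y$ whenever $\bfx\in Y$, since the map $\bfy\mapsto \delta\widetilde P\bfy$ is a linear isometry of $\R^{3N}$, and hence preserves all pairwise distances $|\bfx_i-\bfx_j|$ (up to a permutation of indices). In particular, no two sphere centers in $\delta\widetilde P\bfx$ come closer than the sum of their radii, so the adjacency matrix is well-defined and equals $PA(\bfx)P^T$ as noted earlier in the excerpt.

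The main step is well-definedness: I must show that if $\bfx\sim\bfy$, then $\delta\widetilde P\bfx\sim\delta\widetilde P\bfy$. Assume $\bfx\sim\bfy$; then $A(\bfx)=A(\bfy)=:A$ and there is a continuous path $\varphi:[0,1]\to M_A$ from $\bfx$ to $\bfy$. Define $\psi(t)=\delta\widetilde P\varphi(t)=\phi_{P,\delta}(\varphi(t))$, where $\phi_{P,\delta}$ is the map from Lemma \ref{lem:isometry}. By that lemma, $\phi_{P,\delta}$ carries $M_{A,\bfx}$ isometrically (and in particular homeomorphically) onto $M_{PAP^T,\delta\widetilde P\bfx}$, so $\psi$ is continuous, lies entirely in $M_{PAP^T}$, and satisfies $\psi(0)=\delta\widetilde P\bfx$, $\psi(1)=\delta\widetilde P\bfy$. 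Moreover $A(\delta\widetilde P\bfx)=A(\delta\widetilde P\bfy)=PAP^T$, so by the definition \eqref{eq:sim} of $\sim$ we conclude $\delta\widetilde P\bfx\sim\delta\widetilde P\bfy$, i.e.\ $[\delta\widetilde P\bfx]=[\delta\widetilde P\bfy]$. Thus $(P,\delta)\cdot[\bfx]$ does not depend on the chosen representative, and the second equality in the statement, $[\delta\widetilde P\bfx]=M_{PA(\bfx)P^T,\delta\widetilde P\bfx}$, then follows from \eqref{relation}.

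For the group-action axioms, the identity element of $P(N)\times C_2$ is $(I_N,+1)$, and $(I_N,+1)\cdot[\bfx]=[\widetilde{I_N}\bfx]=[\bfx]$ since $\widetilde{I_N}=I_{3N}$. For compatibility, recall that the product in $P(N)\times C_2$ is componentwise: $(P_1,\delta_1)(P_2,\delta_2)=(P_1P_2,\delta_1\delta_2)$. Using the mixed-product property of the Kronecker product, $(P_1P_2)\otimes I_3=(P_1\otimes I_3)(P_2\otimes I_3)$, i.e.\ $\widetilde{P_1P_2}=\widetilde{P_1}\widetilde{P_2}$, and the fact that the scalar $\delta_2\in\{\pm1\}$ commutes with $\widetilde{P_1}$, I compute
\begin{equation*}
\bigl((P_1,\delta_1)(P_2,\delta_2)\bigr)\cdot[\bfx]
=[\delta_1\delta_2\,\widetilde{P_1}\widetilde{P_2}\,\bfx]
=(P_1,\delta_1)\cdot[\delta_2\widetilde{P_2}\bfx]
=(P_1,\delta_1)\cdot\bigl((P_2,\delta_2)\cdot[\bfx]\bigr),
\end{equation*}
where the middle equality uses well-definedness applied to the representative $\delta_2\widetilde{P_2}\bfx$ of $(P_2,\delta_2)\cdot[\bfx]$.

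I expect the only nontrivial step to be the well-definedness argument, and its entire force rests on Lemma \ref{lem:isometry}: without the isometry (and hence continuity) of $\phi_{P,\delta}$, one could not transport the path $\varphi$ to a path connecting $\delta\widetilde P\bfx$ and $\delta\widetilde P\bfy$ inside the possibly different manifold $M_{PAP^T}$. The identity and compatibility axioms are essentially bookkeeping, relying only on $\widetilde{P_1P_2}=\widetilde{P_1}\widetilde{P_2}$ and the commutativity of the sign $\delta\in C_2$ with linear maps.
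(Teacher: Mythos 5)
Your proof is correct and follows essentially the same route as the paper's: well-definedness is obtained by transporting the connecting path through the isometry $\phi_{P,\delta}$ of Lemma \ref{lem:isometry}, and the action axioms by the direct computation $\widetilde{P_1P_2}=\widetilde{P_1}\widetilde{P_2}$ with the sign commuting. Your extra remarks (that $\delta\widetilde{P}\bfx\in Y$ and the explicit Kronecker mixed-product step) are fine elaborations of details the paper leaves implicit.
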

\begin{proof}
We need to show two things: one, that the action is well-defined on equivalence classes, and two, that it respects the property of multiplication. 

For the first, 
suppose $\bfx \sim \bfy$ are two equivalent clusters. We want to show that $\delta \widetilde{P}\bfx \sim \delta \widetilde{P}\bfy$ or, equivalently, that $[\delta \widetilde{P}\bfx] = [\delta \widetilde{P}\bfy]$, for each $(P,\delta) \in P(N)\times C_2$. We have 
\begin{equation*}
[\delta \widetilde{P}\bfx] = M_{PA(\bfx)P^T, \delta\widetilde{P}\bfx}, \qquad [\delta \widetilde{P}\bfy] = M_{PA(\bfy)P^T, \delta \widetilde{P}\bfy}.
\end{equation*}
Since $\bfx \sim \bfy$, we have that $M_{A(\bfx), \bfx} = M_{A(\bfy), \bfy}$. In particular, this implies $A(\bfx) = A(\bfy)$, and the existence of a continuous path $\varphi : [0,1] \longrightarrow M_{A(\bfx)}$ such that $\varphi(0) = \bfx$ and $\varphi(1) = \bfy$. We now consider the isometry $\phi_{P,\delta}$ as in \eqref{eq:phi} and construct the path 
\begin{equation*}
\psi(t) = (\phi_{P,\delta} \circ \varphi)(t) : [0,1] \longrightarrow M_{PA(\bfx)P^T}\,,
\end{equation*}
which is a continuous path in $M_{PA(\bfx)P^T}$ connecting $\psi(0) = \delta\widetilde{P}\bfx$ with $\psi(1) = \delta \widetilde{P}\bfy$. Therefore, $\delta \widetilde{P}\bfx$ and $\delta \widetilde{P}\bfy$ belong to the same connected component of $M_{PA(\bfx)P^T}$, implying $M_{PA(\bfx)P^T, \delta \widetilde{P}\bfx} = M_{PA(\bfy)P^T, \delta \widetilde{P}\bfy}$, which proves our claim. 

Finally, $\cdot$ is an action since, for every $(P,\delta)$, $(Q,\mu) \in P(N)\times C_2$,
\begin{equation*}
(P,\delta) \cdot \left((Q,\mu) \cdot [\bfx]\right) = (P,\delta) \cdot [\mu\widetilde{Q}\bfx] = [\delta \mu \widetilde{PQ} \bfx] = (PQ, \delta \mu) \cdot [\bfx] = ((P,\delta) (Q, \mu)) \cdot [\bfx],
\end{equation*}
and clearly $(I_N,1) \cdot [\bfx] = [\bfx]$, where $I_N$ is the $N \times N$ identity matrix. 
\end{proof}

Finally, we prove the connection between point group and distance matrix of a cluster, a result that is widely used but for which we have found no accessible proof in the literature, so we prove it here for completeness.  
A canonical (though incomplete) reference for this result is  \cite{YH38}, and \cite{edm} provides a clear explanation of distance matrices and some common manipulations with them. 

\begin{prop}
Let $\bfx \in \R^{3N}$ be a cluster of $N$ spheres, and let $\Po_{\bfx}$ be the point group of $\bfx$ as in \eqref{eq:PG}. Then 
\begin{equation}\label{eq:point_2}
(P,\delta) \in \Po_{\bfx} \text{ for some } \delta \in C_2 \;\;\Longleftrightarrow\;\; PD_{\bfx} = D_{\bfx}P. 
\end{equation}

\end{prop}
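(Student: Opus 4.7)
The plan is to prove both implications separately. The forward direction ($\Rightarrow$) is a one-line distance calculation: if $\delta \widetilde{P}\bfx = (R \otimes I_N)\bfx$ with $R \in SO(3)$ and $\pi$ the permutation associated with $P$, then $\delta\bfx_{\pi(i)} = R\bfx_i$ for each $i$, so $(D_{\bfx})_{\pi(i)\pi(j)} = |R\bfx_i - R\bfx_j|^2 = |\bfx_i - \bfx_j|^2 = (D_{\bfx})_{ij}$. This is exactly $PD_{\bfx}P^T = D_{\bfx}$, equivalently $PD_{\bfx} = D_{\bfx}P$. The sign $\delta$ drops out because squaring kills it.

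The reverse direction ($\Leftarrow$) is the substantive one, essentially the classical statement that a point configuration in $\R^3$ is determined by its pairwise distance matrix up to a rigid motion, specialized to a permutation setting. I would first reduce to the case $\sum_i \bfx_i = \mathbf{0}$, using the paper's convention of fixing the center of mass at the origin. Arrange the sphere centers as columns of a $3 \times N$ matrix $X = [\bfx_1 \mid \cdots \mid \bfx_N]$; then the permuted cluster $\widetilde{P}\bfx$ corresponds to the matrix $Y = XP^T$. The hypothesis $PD_{\bfx}P^T = D_{\bfx}$, together with the centered condition $X\mathbf{1}=\mathbf{0}$, converts via the standard identity $D_{ij} = G_{ii} + G_{jj} - 2G_{ij}$ for the Gram matrix $G = X^TX$ into equality of Gram matrices, $Y^TY = X^TX$. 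The key lemma is that this Gram-matrix equality forces $Y = RX$ for some $R \in O(3)$; I would prove it via the singular value decomposition $X = U\Sigma V^T$, noting that $X^TX = V\Sigma^2V^T$ pins down $\Sigma$ and $V$, so any $Y$ with the same Gram matrix admits a factorization $Y = U'\Sigma V^T$, giving $R = U'U^T \in O(3)$.

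Once $R \in O(3)$ is produced, a case split on $\det R$ delivers the sign $\delta$: if $\det R = +1$ then $R \in SO(3)$ and $(P,+1)\in \Po_{\bfx}$; if $\det R = -1$ then $-R \in SO(3)$ (since $3$ is odd, $\det(-R)=(-1)^3\det R = +1$), and $-\widetilde{P}\bfx = ((-R)\otimes I_N)\bfx$ gives $(P,-1)\in\Po_{\bfx}$. The main obstacle is the Gram-matrix reconstruction lemma: it is classical but requires care when $X$ has rank less than $3$ (collinear or coplanar clusters), where $R$ is not uniquely determined on the orthogonal complement of $\mathrm{im}(X)$ and one must extend the orthogonal map on $\mathrm{im}(X)$ to a full orthogonal transformation of $\R^3$. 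The other small nuisance is keeping track of whether $P$ acts on the left or right under the paper's Kronecker conventions, but that is pure bookkeeping once one fixes that $\widetilde{P}\bfx$ in matrix form corresponds to $XP^T$.
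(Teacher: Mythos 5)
Your proposal is correct, and its core is the same as the paper's: both directions ultimately run through the Gram matrix, and the key lemma is identical (equal Gram matrices imply the two configurations differ by an orthogonal transformation of $\R^3$, proved via the SVD, with the same care needed for repeated or zero singular values). The differences lie in how $D_{\bfx}$-invariance is converted to Gram-matrix invariance, and in the endgame, and they are worth recording. The paper uses the identity $D_{\bfx} = \mathrm{diag}(G_{\bfx})\mathbf{1}^T - 2G_{\bfx} + \mathbf{1}\,\mathrm{diag}(G_{\bfx})^T$ together with the assertion that $P\,\mathrm{diag}(G_{\bfx})\mathbf{1}^T P^T = \mathrm{diag}(G_{\bfx})\mathbf{1}^T$ for any permutation; in fact $P\,\mathrm{diag}(G_{\bfx})\mathbf{1}^T P^T = \mathrm{diag}(PG_{\bfx}P^T)\mathbf{1}^T$, so that step (and indeed the reverse implication itself, since \eqref{eq:PG} demands a rotation about the origin with no translation) needs the cluster to be centered, or at least the permutation to preserve distances to the origin: for two touching identical spheres with one center at the origin, the swap preserves $D_{\bfx}$ but is realized by no rotation about the origin. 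Your route makes this hypothesis explicit by fixing the center of mass at the origin (the paper's standing convention) and passing through the double-centering identity, which is the sound way to get $PD_{\bfx}P^T = D_{\bfx} \Rightarrow PG_{\bfx}P^T = G_{\bfx}$; the only caveat is that this is an assumption matching the setting in which the proposition is used, not a genuine ``reduction,'' because the statement is not translation invariant. Your forward direction by a direct computation on pairwise distances is a mild simplification of the paper's Gram-matrix argument, and your explicit case split on $\det R$, using that $-R \in SO(3)$ in odd dimension to extract $\delta$, completes a step the paper leaves implicit after producing only $Q \in O(3)$.
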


\begin{proof}
Let 
$B = (\bfx_1| \ldots |\bfx_N)^T$
%
 be the $N \times 3$ matrix whose rows are $\bfx_i$, for $i = 1, \ldots, N$. Let $G_{\bfx}$ be the $N \times N$ Gram matrix of $\bfx$, computed from $B$ as
 \[
  G_{\bfx} = BB^T\,.
 \] 
 We first prove that preserving the distance matrix is equivalent to preserving the Gram matrix, i.e. 
\begin{equation}\label{DG}
PD_{\bfx} P^T = D_{\bfx} \;\;\Longleftrightarrow \;\; PG_{\bfx}P^T = G_{\bfx},
\end{equation}
for every $P \in \G$. 
To see this, first notice that the matrices $G_{\bfx}$ and $D_{\bfx}$ are related by the formula 
\begin{equation}\label{eq:dx2}
D_{\bfx} = \text{diag}(G_{\bfx})\mathbf{1}^T -2G_{\bfx}+\mathbf{1}\text{diag}(G_{\bfx})^T,
\end{equation}
where $\mathbf{1}$ is a $N \times 1$ column vector of all ones \cite{edm}. 
Next, notice that $P(\text{diag}(G_{\bfx})\mathbf{1}^T)P^T = \text{diag}(G_{\bfx})\mathbf{1}^T$. In fact, the $i$-th row of the matrix $K = \text{diag}(G_{\bfx})\mathbf{1}^T$ is a $1 \times N$ vector with the same entries, i.e. of the form $(c_i, \ldots, c_i)$, for some $c_i \in \R$. Therefore, the permutation matrix $P$ permutes the rows of $K$, which are then permuted again using the inverse permutation $P^T$, resulting in the original matrix $K$.  
We use this and \eqref{eq:dx2} to compute
\begin{align}
PD_{\bfx}P^T 
&= \text{diag}(G_{\bfx})\mathbf{1}^T-2(PG_{\bfx}P^T) + \mathbf{1}\text{diag}(G_{\bfx})^T\,.\label{eq:dx3}
\end{align}
Subtracting \eqref{eq:dx2} from \eqref{eq:dx3} gives
\[
PD_{\bfx}P^T - D_{\bfx} = -2(PG_{\bfx}P^T -G_{\bfx} )
\]
from which implication \eqref{DG} is clear. 


We now shift the attention to the Gram matrix $G_{\bfx}$. Specifically, we want to prove that, given $P \in \G$, then $(P,\delta) \in P_{\bfx}$ for some $\delta\in C_2$, if and only if $PG_{\bfx}P^T = G_{\bfx}$.

Showing that a permutation in the point group preserves the Gram matrix follows by direct calculation. 
Suppose $(P,\delta) \in P_{\bfx}$. Then, by definition of $\Po_{\bfx}$ (see \eqref{eq:PG}), there exists $R \in SO(3)$ such that $\widetilde{P}\bfx = \delta (R \otimes I_N)\bfx$. Let $\bfy = \widetilde{P}\bfx$, and $C = (\bfy_1| \ldots |\bfy_N)^T$. Then $C = PB$. The Gram matrix $G_{\bfy}$ of $\bfy$ is 
$
G_{\bfy} = CC^T = (PB)(PB)^T = PG_{\bfx}P^T. 
$ 
On the other hand, since $\bfy = \delta(R \otimes I_N)\bfx$, then $C = \delta BR$. This implies 
$
G_{\bfy} = (\delta BR)(\delta BR)^T = BB^T = G_{\bfx}.
$ 
Therefore $PG_{\bfx}P^T = G_{\bfx}$.

For the other direction, suppose $P$ is such that $PG_{\bfx}P^T = G_{\bfx}$. Then  $(PB)(PB)^T = BB^T$,  so the clusters $\bfy$ (formed from the rows of $C=PB$) and $\bfx$ (formed from the rows of $B$) have the same Gram matrix. 

It remains to show that if two clusters have the same Gram matrix, then they are related by an orthonormal transformation. 
This is a result in linear algebra that we reproduce here. 
Let 
\[
B = U_1\Sigma_1 V_1^T, \qquad C = U_2\Sigma_2 V_2^T
\]
be the singular value decompositions of $B,C\in \R^{N\times 3}$, 
where $U_i\in \R^{N\times N}$, $V_i\in \R^{3\times 3}$, $\Sigma_i\in \R^{N\times 3}$, for $i=1,2$. 
We are given that $G_{\bfx}=BB^T=CC^T$ and so $U_1\Sigma_1\Sigma_1^TU_1^T = U_2\Sigma_2\Sigma_2^TU_2^T$. But this is an eigenvalue decomposition of the symmetric matrix $G_{\bfx}$, which is unique up to reordering of eigenvalues and up to eigenvalues that are the same. We may order the eigenvalues in order of decreasing absolute value, and therefore the diagonal elements of the diagonal matrices $\Sigma_1\Sigma_1^T$, $\Sigma_2\Sigma_2^T$ may be chosen to be the same, which implies, since the diagonal elements of $\Sigma_1,\Sigma_2$ are nonnegative, that $\Sigma_1=\Sigma_2$. 
For eigenvalues that are the same, we may choose any orthogonal basis for the corresponding columns of $U_i$ among the available eigenvectors, and therefore we may choose the bases such that $U_1=U_2$. Therefore, we may write $C=U_2\Sigma_2V_2^T=U_1\Sigma_1 V_2^T = U_1\Sigma_1V_1^TV_1 V_2^T =BV_1V_2^T = BQ$ where $Q=V_1V_2^T\in \R^{3\times 3}\in O(3)$, and the result is proven. \qedhere


\end{proof}

\bibliographystyle{unsrt}
\bibliography{symmetry}

\begin{thebibliography}{10}

\bibitem{sethna}
J.~P. Sethna.
\newblock {\em Statistical Mechanics: Entropy, Order Parameters and
  Complexity}.
\newblock Oxford University Press, 2006.

\bibitem{longuet}
H.C. Longuet-Higgins.
\newblock The symmetry group of non-rigid molecules.
\newblock {\em Molecular Physics}, 5(6):445--460, 1963.

\bibitem{symmetry2}
J.~Ivanov and G.~Schuurmann.
\newblock Simple algorithms for determining the molecular symmetry.
\newblock {\em J. Chem. Inf. Comput. Sci.}, 39:728--737, 1999.

\bibitem{wales}
D.J. Wales.
\newblock {\em Energy landscapes}.
\newblock Cambridge University Press, 2003.

\bibitem{symmetry5}
W.~Chen, J.~Huang, and M.K. Gilson.
\newblock Identification of symmeties in molecules and complexes.
\newblock {\em J. Chem. Inf. Comput. Sci.}, 44:1301--1313, 2004.

\bibitem{sitharam}
M.~Sitharam, A.~Vince, M.~Wang, and M.~Bona.
\newblock Symmetry in sphere-based assembly configuration spaces.
\newblock {\em Symmetry}, 8:5, 2016.

\bibitem{Wales:2014eh}
D~J Wales and P~Salamon.
\newblock {Observation time scale, free-energy landscapes, and molecular
  symmetry}.
\newblock {\em Proc. Natl. Acad. Sci.}, 111(2):617--622, January 2014.

\bibitem{gap}
The GAP~Group.
\newblock {\em GAP -- Groups, Algorithms, and Programming, Version 4.8.10},
  2018.

\bibitem{symmetry1}
M.K. Gilson and K.K. Irikura.
\newblock Symmetry number for rigid, flexible, and fluxional molecules: theory
  and applications.
\newblock {\em J.Phys.Chem. B}, 114:16304--16317, 2010.

\bibitem{flapan}
E.~Flapan.
\newblock Topological chirality and symmetries of non-rigid molecules.
\newblock {\em Proceedings of Symposia in Applied Mathematics}, 2009.

\bibitem{Lu:2013dn}
Peter~J Lu~陸述義 and David~A Weitz.
\newblock {Colloidal Particles: Crystals, Glasses, and Gels}.
\newblock {\em Annu. Rev. Condens. Matter Phys.}, 4(1):217--233, April 2013.

\bibitem{colloids}
V.~Manoharan.
\newblock Colloidal matter: packing, geometry and entropy.
\newblock {\em Science}, 349:1253751, 2015.

\bibitem{clusters}
Miranda Holmes-Cerfon.
\newblock {Sticky-Sphere Clusters}.
\newblock {\em Annu. Rev. Condens. Matter Phys.}, 8(1):77--98, March 2017.

\bibitem{cates}
M.E. Cates and V.N. Manoharan.
\newblock Celebrating {Soft} {Matter}'s 10th anniversary: testing the
  foundations of classical entropy: colloidal experiments.
\newblock {\em Soft matter}, 11:6538--6546, 2015.

\bibitem{Swendsen:2006gm}
Robert~H Swendsen.
\newblock {Statistical mechanics of colloids and Boltzmann{\textquoteright}s
  definition of the entropy}.
\newblock {\em American Journal of Physics}, 74(3):187--4, 2006.

\bibitem{Frenkel:2014cn}
Daan Frenkel.
\newblock {Why colloidal systems can be described by statistical mechanics:
  some not very original comments on the Gibbs paradox}.
\newblock {\em Molecular Physics}, 112(17):2325--2329, August 2014.

\bibitem{kress}
R.~Kress.
\newblock {\em Numerical analysis}.
\newblock Springer-Verlag New York, 1998.

\bibitem{liu}
J.S. Liu.
\newblock {\em Monte Carlo strategies in scientific computing}.
\newblock Springer-Verlag, New York, 2001.

\bibitem{rigid}
L.~Asimow and B.~Roth.
\newblock The rigidity of graphs.
\newblock {\em Trans. Am. Math. Soc.}, pages 279--289, 1978.

\bibitem{Kallus:2017hi}
Yoav Kallus and Miranda Holmes-Cerfon.
\newblock {Free energy of singular sticky-sphere clusters}.
\newblock {\em Phys. Rev. E}, 95(2):2491--18, February 2017.

\bibitem{miranda}
M.~Holmes-Cerfon, S.J. Gortler, and M.P. Brenner.
\newblock A geometrical approach to computing free-energy landscapes from
  short-ranged potentials.
\newblock {\em Proc. Natl. Acad. Sci.}, 110(1), 2013.

\bibitem{Meng:2010gsa}
G~Meng, N~Arkus, M~P Brenner, and V~N Manoharan.
\newblock {The Free-Energy Landscape of Clusters of Attractive Hard Spheres}.
\newblock {\em Science}, 327(5965):560--563, January 2010.

\bibitem{hormoz}
S.~Hormoz and M.P. Brenner.
\newblock Design principles for self assembly with short ranged interactions.
\newblock {\em Proc. Natl. Acad. Sci.}, 108(13):5193--98, 2011.

\bibitem{Zeravcic:2014it}
Zorana Zeravcic, Vinothan~N Manoharan, and Michael~P Brenner.
\newblock {Size limits of self-assembled colloidal structures made using
  specific interactions}.
\newblock {\em Proceedings of the National Academy of Sciences},
  111(45):15918--15923, November 2014.

\bibitem{mcmc}
E.~Zappa, M.~Holmes-Cerfon, and J.~Goodman.
\newblock Monte {C}arlo on manifolds: sampling densities and integrating
  functions.
\newblock {\em Comm. Pure Appl. Math}, 71:2609--2647, 2018.

\bibitem{jones}
H.F. Jones.
\newblock {\em Groups, Representations and Physics}.
\newblock Institute of Physics Publishing, 1990.

\bibitem{nauty}
B.~D. Mc{K}ay and A.~Piperno.
\newblock Practical graph isomorphism, {II}.
\newblock {\em J. Symbolic Computation}, 60:94--112, 2013.

\bibitem{symmetry3}
J.~Ivanov.
\newblock Molecular symmetry perception.
\newblock {\em J.Chem.Inf.Comput.Sci.}, 44:596--600, 2004.

\bibitem{graph1}
K.~Balasubramanian.
\newblock Graph theoretical perception of molecular symmetry.
\newblock {\em Chemical Physics Letters}, 232:415--423, 1995.

\bibitem{YH38}
G.~Young and A.~S. Householder.
\newblock Discussion of a set of points in terms of their mutual distances.
\newblock {\em Psychometrika}, 3(1):19--22, 1938.

\bibitem{nocedal}
J.~Nocedal and S.~Wright.
\newblock {\em Numerical Optimization}.
\newblock Springer-Verlag New York, 2 edition, 2006.

\bibitem{Bolhuis:2002ew}
Peter~G Bolhuis, David Chandler, Christoph Dellago, and Phillip~L Geissler.
\newblock {Transition path sampling: Throwing Ropes Over Rough Mountain Passes,
  in the Dark}.
\newblock {\em Annual Review of Physical Chemistry}, 53(1):291--318, October
  2002.

\bibitem{arkus}
N.~Arkus, V.~Minoharan, and M.~Brenner.
\newblock Minimal energy clusters of hard spheres with short range attractions.
\newblock {\em Phys. Rev. Lett.}, 103:118303, 2009.

\bibitem{siam}
M.~Holmes-Cerfon.
\newblock Enumerating rigid sphere packings.
\newblock {\em SIAM Review}, 58(2):229--244, 2016.

\bibitem{Perry:2015ku}
Rebecca~W Perry, Miranda~C Holmes-Cerfon, Michael~P Brenner, and Vinothan~N
  Manoharan.
\newblock {Two-Dimensional Clusters of Colloidal Spheres: Ground States,
  Excited States, and Structural Rearrangements}.
\newblock {\em Physical Review Letters}, 114(22):228301--5, June 2015.

\bibitem{Carlsson:2012fo}
Gunnar Carlsson, Jackson Gorham, Matthew Kahle, and Jeremy Mason.
\newblock {Computational topology for configuration spaces of hard disks}.
\newblock {\em Physical Review E}, 85(1):011303, January 2012.

\bibitem{Martiniani:2016bt}
Stefano Martiniani, K~Julian Schrenk, Jacob~D Stevenson, David~J Wales, and
  Daan Frenkel.
\newblock {Structural analysis of high-dimensional basins of attraction}.
\newblock {\em Physical Review E}, 94(3):031301--5, September 2016.

\bibitem{Wang:2015ep}
Yu~Wang, Yufeng Wang, Xiaolong Zheng, Eacute~tienne Ducrot, Jeremy~S Yodh,
  Marcus Weck, and David~J Pine.
\newblock {Crystallization of DNA-coated colloids}.
\newblock {\em Nature Communications}, 6:1--8, June 2015.

\bibitem{Rogers:2016bd}
W~Benjamin Rogers, William~M Shih, and Vinothan~N Manoharan.
\newblock {Using DNA to program the self-assembly of colloidal nanoparticles
  and microparticles}.
\newblock {\em Nature Reviews Materials}, 1(3):10760--14, March 2016.

\bibitem{edm}
I.~Dokmanic, R.~Parhizkar, J.~Ranieri, and M.~Vetterli.
\newblock Euclidean distance matrices.
\newblock {\em IEEE Signal Processing Magazine}, pages 12--30, 2015.

\end{thebibliography}

\end{document}